\newcommand{\ket}[1]{|#1\rangle}
\newcommand{\F}{\mathbb{F}_2}
\newcommand{\Z}{\mathbb{Z}}
\newcommand{\N}{\mathbb{N}}
\newcommand{\onebytwo}[2]{\begin{pmatrix} #1 & #2\end{pmatrix}}
\newcommand{\twobytwo}[4]{\begin{pmatrix} #1 & #2 \\ #3 & #4\end{pmatrix}}
\newcommand{\threebytwo}[6]{\begin{pmatrix} #1 & #2 \\ #3 & #4 \\ #5 & #6\end{pmatrix}}
\newcommand{\twobyone}[2]{\begin{pmatrix} #1 \\ #2\end{pmatrix}}
\newcommand{\onebythree}[3]{\begin{pmatrix} #1 & #2 & #3\end{pmatrix}}
\newcommand{\threebyone}[3]{\begin{pmatrix} #1 \\ #2 \\ #3\end{pmatrix}}
\newcommand{\twobythree}[6]{\begin{pmatrix} #1 & #2 & #3 \\ #4 & #5 & #6 \end{pmatrix}}
\newcommand{\threebythree}[9]{\begin{pmatrix} #1 & #2 & #3 \\ #4 & #5 & #6 \\ #7 & #8 & #9 \end{pmatrix}}
\newcommand{\onebyfour}[4]{\begin{pmatrix} #1 & #2 & #3 & #4 \end{pmatrix}}
\newcommand{\twobyfour}[8]{\begin{pmatrix} #1 & #2 & #3 & #4 \\ #5 & #6 & #7 & #8 \end{pmatrix}}
\newcommand{\PauliX}{\sigma_x}
\newcommand{\PauliY}{\sigma_y}
\newcommand{\PauliZ}{\sigma_z}
\newcommand{\SymplecticProductMatrix}{\Omega}
\newcommand{\PauliGroup}[1]{\mathcal{P}_{#1}}
\newcommand{\PauliGroupHomomorphism}{\Phi}
\newcommand{\CliffordGroup}[1]{\mathcal{CL}_{#1}}
\newcommand{\Normalizer}[2]{\mathcal{N}_{#2}\left(#1\right)}
\newcommand{\UnitaryGroup}[1]{\mathcal{U}_{#1}}
\newcommand{\SymplecticGroup}[1]{\mathcal{SP}_{2 #1}}
\newcommand{\CliffordGroupHomomorphism}{\Psi}
\newcommand{\GP}[1]{\Pi\left(#1\right)}
\newcommand{\ones}[1]{\vec{1}_{#1}}
\newtheorem{theorem}{Theorem}
\newtheorem{definition}{Definition}
\newtheorem{corollary}{Corollary}
\newcommand{\AND}[2]{#1 \text{ AND } #2}
\newcommand{\NOT}[1]{\text{NOT}(#1)}
\newcommand{\MIN}[1]{\mathcal{MIN}\left(#1\right)}
\newcommand{\MAX}[1]{\mathcal{MAX}\left(#1\right)}
\newcommand{\GUP}[1]{\left\langle #1 \right\rangle_{\uparrow}}
\newcommand{\GDO}[1]{\left\langle #1 \right\rangle_{\downarrow}}
\newcommand{\GRM}[1]{\mathcal{GRM}\left(#1\right)}
\newcommand{\GRMT}[1]{\mathcal{GRM}'\left(#1\right)}
\begin{document}
\title{Quantum LDPC codes from intersecting subsets}
\author{Dimiter Ostrev}
\date{German Aerospace Center (DLR), Institute of Communications and Navigation, 82234 We{\ss}ling, Germany}
\maketitle

\begin{abstract}
This paper introduces a construction of quantum CSS codes from a tuple of component CSS codes and two collections of subsets. The resulting codes have parallelizable encoding and syndrome measurement circuits and built-in redundancy in the syndrome measurements. In a certain subfamily of the general construction, the resulting codes are related to a natural generalization of classical Reed-Muller codes, and this leads to formulas for the distance of the quantum code as well as for the distance of the associated classical code that protects against errors in the syndrome. The paper gives a number of examples of codes with block size $2^m, m=3,\dots,9$, and with syndrome measurements involving 2, 4 or 8 qubits. These include codes for which the distance exceeds the syndrome measurement weight, as well as codes which provide asymmetric protection against bit flip and phase flip errors. 
\end{abstract}




\section*{Copyright}

\copyright 2024 IEEE. Personal use of this material is permitted. Permission from IEEE must be obtained for all other uses, in any current or future media, including reprinting/republishing this material for advertising or promotional purposes, creating new collective works, for resale or redistribution to servers or lists, or reuse of any copyrighted component of this work in other works.

\section{Introduction}

Quantum error correction \cite{shor1995scheme,steane1996errorcorrecting} protects quantum states from the effects of noise during communication or computation. Many of the codes studied so far can be described using the stabilizer formalism \cite{gottesman1996class,calderbank1997quantum}. 

In recent years, quantum low-density parity check codes have received considerable attention \cite{mackay2004sparse,tillich2014quantum,hastings2021fiber,breuckmann2021balanced,panteleev2021asymptotically,leverrier2022quantum,gu2023efficient,leverrier2022efficient,dinur2023goodquantum}. In these codes, each syndrome measurement involves only a few qubits, and this is considered favourable for fault-tolerance. Moreover, there is hope that low-complexity message passing decoders \cite{gallager1963low,richardson2001capacity} known from classical LDPC codes can be adapted to the quantum case.

Research on quantum LDPC codes can be classified according to various criteria. This introduction considers three: the ideas motivating the code design, the axis communication-fault tolerance, and the axis finite block length-asymtotic regime. The following paragraphs give a high-level description and some references for each, then explain where the present work stands with respect to each criterion. 

\paragraph{Ideas motivating the code design}

The construction methods used for classical LDPC codes do not easily translate to their quantum counterparts, because of the constraint that stabilizer elements must commute. In recent years, much progress in quantum LDPC codes has been achieved using ideas from homological algebra, geometry and topology. Exploring in detail the many and varied results is beyond the scope of this introduction; fortunately, the recent survey \cite{breuckmann2021quantum} gives an excellent overview. Subsequent to the publication of that survey, the long term goal of constructing  asymptotically good quantum LDPC codes was achieved by lifted product codes \cite{panteleev2021asymptotically} and quantum Tanner codes \cite{leverrier2022quantum,gu2023efficient,leverrier2022efficient,dinur2023goodquantum}. 

By contrast, the present paper uses a code design that is arguably much simpler and much closer to classical coding. 

Consider the first construction of classical LDPC codes due to Gallager \cite{gallager1963low} in relationship to the earlier construction of classical product codes \cite{elias1954errorfree}. Specifically, think first of the parity check matrix of a classical product of single parity check codes. The resulting parity check matrix has layers of rows, with each layer having rows of disjoint support, whose union covers all columns. For example, the parity check matrix of the two-fold product of the [3,2,2] single parity check code with itself is $$\twobyone{\threebythree{1}{0}{0}{0}{1}{0}{0}{0}{1} \otimes \onebythree{1}{1}{1}}{\onebythree{1}{1}{1}\otimes\threebythree{1}{0}{0}{0}{1}{0}{0}{0}{1}}= \begin{pmatrix}1&1&1&0&0&0&0&0&0\\0&0&0&1&1&1&0&0&0\\0&0&0&0&0&0&1&1&1\\ \hline 1&0&0&1&0&0&1&0&0\\0&1&0&0&1&0&0&1&0\\0&0&1&0&0&1&0&0&1\end{pmatrix}$$ Now, think of Gallager's construction of classical LDPC codes \cite[Section 2.2 and Figure 2.1]{gallager1963low}. Gallager uses as a basic building block a layer of rows with disjoint support whose union covers all columns. He then builds the full parity check matrix from several such layers, each of which is a random column permutation of the first layer. Thus, Gallager's LDPC codes can be viewed as a generalization of products of single parity check codes, obtained by allowing more general column permutations for the layers.

How can Gallager's design using layers of rows be extended to quantum Calderbank-Shor-Steane codes? Since the parity check matrices for bit flips and phase flips must be orthogonal to each other, completely arbitrary column permutations of the layers are not possible. However, progress may be obtained by imposing some additional structure. Recently, Hivadi \cite{hivadi2018quantum} showed how to construct a $[16s^2,16s^2-16s+2,4]]$ quantum CSS code such that one of the parity check matrices is the classical 2-fold product of the $[4s,4s-1]$ single parity check code, and the other parity check matrix is a column permutation of the first one. Later, this was generalized in \cite{ostrev2024classicalproduct} to classical $D$-fold products for any $D$. 

The present paper presents new quantum Calderbank-Shor-Steane codes that are a natural generalization of the product code construction of \cite{ostrev2024classicalproduct}. In the context of the preceding discussion, this generalization can be viewed as introducing greater flexibility in choosing column permutations for the layers, while at the same time retaining sufficient structure to ensure that the two parity check matrices of the quantum CSS code are orthogonal. For reasons that will become clear later, this new family is called intersecting subset codes. 

Using a design with layers of rows as described above automatically leads to parity check matrices that are not full rank: for example, the sum of rows in each layer is the all-ones vector. When the column permutations of different layers are random, it seems difficult to say much more about the linear dependence of the rows. However, it turns out that the additional structure present in intersecting subset codes is enough to compute explicitly a basis for the stablizer. Moreover, it turns out that this basis consists of rows of a tensor product of invertible matrices, which points to a surprising connection to polar \cite{arikan2009channel} and Reed-Muller \cite{reed1954class,muller1954application} codes. Thus, there are at least two structured choices of a spanning set for the stabilizer of intersecting subset codes; the correspondence between them generalizes the connection between classical product and polar codes observed in \cite{coskun2023successivecancellation}.  From these two representations, a number of desirable properties of intersecting subset codes are derived. 

The first choice of a spanning set for the stabilizer corresponds to the original motivation of obtaining parity check matrices similar to Gallager's design. These can be divided into layers, with the measurements in each layer done in parallel, and the layers measured in sequence. Moreover, for suitable examples of intersecting subset codes, each measurement involves only a few qubits. 

The second choice of a spanning set for the stabilizer consists of subsets of the rows of a tensor product of invertible matrices and its inverse transpose. From this, an encoding circuit follows, which can also be divided into layers, with operations in each layer done in parallel, and the different layers applied in sequence. The second representation is used to derive independent generators for the normalizer and a canonical choice for the logical operators. Moreover, in a certain subfamily of intersecting subset codes, the encoding circuit simplifies to the same recursive pattern of CNOT gates that is used to encode classical Reed-Muller and polar codes. In this subfamily, the $X$ and $Z$ parity check matrices are related to a natural generalization of classical Reed-Muller codes, and this allows one to derive a formula for the quantum code distance. 

It is worth noting that while classical Reed-Muller codes have been used previously to construct quantum stabilizer codes \cite{knill1996threshold,steane1999quantumreedmuller,sarvepalli2009asymmetric}, the methods proposed there lead to codes with stabilizer weight exceeding the weight of logical operators. It is only through generalizing classical Reed-Muller codes that the present work is able to obtain examples with distance exceeding the syndrome measurement weight. 

\paragraph{Communication or fault-tolerance}

In the standard model for communications, encoding and decoding are considered as perfect operations, and errors come only from the noisy channel in between. This model is easiest to work with, but does not accurately capture the imperfections of quantum hardware. The vast majority of previous works on quantum LDPC codes focus on this case. 

On the other end of the specturm is the standard model for fault-tolerance. Here, a specific compilation of encoding and decoding in terms of elementary gates must be considered, and each component of the circuits introduces errors. This model is very difficult to work with and consequently there are very few works on quantum LDPC codes that deal directly with fault tolerance. Remarkable is the work \cite{gottesman2014fault}, which shows that if a quantum LDPC family that satisfies certain condition exists, then fault tolerant quantum computation is possible with asymptotically constant overhead. Later, \cite{fawzi2018constant} showed that a hypergraph product \cite{tillich2014quantum} of classical expander codes \cite{sipser1996expander} satisfies the assumtions of \cite{gottesman2014fault}. More recently, \cite{bravyi2024highthreshold} has constructed a fault-tolerant quantum memory based on tensor product generalized bicycle codes \cite{kovalev2013quantum}. Another interesting recent work is \cite{delfosse2023spacetime}, which reduces the problem of correcting faults in a circuit of Clifford gates and Pauli measurements to the problem of correcting errors in a stabilizer code and identifies conditions under which the resulting code is LDPC. 

A central issue in the study of fault-tolerance with quantum LDPC codes is the ability to correct errors even when syndrome extraction is noisy \cite{gottesman2014fault,fawzi2018constant,bravyi2024highthreshold}. Recently, an intermediate error model has received attention \cite{ashikhmin2020quantumdatasyndrome,nemec2023quantumdatasyndrome}, in which syndrome outcomes are noisy but the full error propagation in the syndrome measurement circuit is not considered. The advantage of this intermediate model is that it is easier to work with than the full-circuit error model, while still giving some insight in the fault-tolerance properties of the studied error correcting codes. 

Along the axis communication versus fault-tolerance, the present work falls at the intermediate stage of considering syndrome errors. The parity check matrices of intersecting subset codes have linearly dedendent rows. Correctly extracted syndromes always fall in the images of the parity check matrices, which can be viewed as classical linear error correcting codes. The distances of these spaces of valid syndromes are measures of the ability to correct errors in the syndrome. These distances can be computed explicitly for a large subfamily of the general construction, owing to the connection to generalized Reed-Muller codes mentioned earlier. 

Moreover, the present work takes an initial step towards fault-tolerance, by providing an explicit compilation of encoding and syndrome measurements into elementary gates. In general, compilation into a circuit where many gates are performed in parallel is considered favorable for fault-tolerance, because the resulting total circuit depth is lower and there is less time for errors to accumulate. The special structure of intersecting subset codes implies that certain parallelization in the circuits for encoding and syndrome measurements is possible. 

\paragraph{Finite block length versus asymptotic regime}

The focus of much recent work has been on properties of asymptotic families of LDPC codes. However, it has been noted that asymptotic constructions do not necessarily produce the best quantum LDPC codes for small or medium block lengths \cite{panteleev2021degeneratequantum}. As a concrete example of this observation relevant to the present work, \cite{ostrev2024classicalproduct} shows that a $[[512,174,8]]$ quantum CSS code obtained from the 3D classical product of the $[8,7,2]$ single parity check code has much better empirical performance than a quantum Tanner code with similar block size and rate.

The present work focuses on block sizes of a few tens or a few hundreds of qubits. There are two reasons for this choice. First, these numbers reflect the current limitations of quantum hardware. Second, modern classical communications systems continue to use these block sizes in some cases, even though there now exists classical hardware capable of handling much longer error correcting codes. Therefore, quantum LDPC codes with small and medium block sizes are interesting at present and are likely to remain so even in a hypothetical future with much better quantum hardware.

The rest of the paper is structured as follows. Section \ref{sec:Preliminaries} covers notational conventions, background material, and some preparatory technical results. Then, section \ref{sec:GeneralCase} gives the general case of intersecting subset codes and derives various properties. Section \ref{sec:SpecialCase} considers a subfamily, defines a generalization of classical Reed Muller codes, and uses these to derive a formula for the distance of the quantum codes as well as the distances of the classical linear codes protecting against syndrome errors. Section \ref{sec:Examples} contains an extensive collection of examples with block sizes $2^m$, $m=3, \dots, m$ and with syndrome measurements on 2, 4, or 8 qubits. Section \ref{sec:Conclusion} concludes the paper and gives some possible directions for future work.  

\section{Preliminaries}\label{sec:Preliminaries}

\subsection{Conventions for row and column indexing}

For an $l \times n$ matrix $A$, the rows will be indexed by
$
[l]=\{0,1,\dots,l-1\}
$
and the columns will be indexed by
$
[n]=\{0,1,\dots,n-1\}
$
Note that sometimes $[n]$ is used to denote the set $\{1,\dots,n\}$, but in the present paper it will be more convenient to start at 0. 

For a tuple of matrices $A_i$ with respective sizes $l_i \times n_i$, rows and columns of 
$
A_0 \otimes A_1 \otimes \dots \otimes A_v 
$
will be indexed resepctively by 
$
[\vec{l}\;]=[l_0] \times \dots \times [l_v]\
$
and
$
[\vec{n}]=[n_0] \times \dots \times [n_v]
$
.

If a linear indexing of rows and columns of $A_1 \otimes \dots \otimes A_v$ is desired, this will be done lexicographically. Thus, the block decomposition
\begin{equation}
\twobytwo{A}{B}{C}{D} \otimes E = \twobytwo{A\otimes E}{B \otimes E}{C \otimes E}{D \otimes E}
\end{equation}
holds. 

\subsection{Conventions for subsets and indicator vectors}\label{sec:ConventionsForSubsetsAndIndicatorVectors}

To $S \subset [n]$ is associated a $1 \times n$ indicator vector that has $1$ in position $i$ if $i \in S$ and zero otherwise. The indicator vector will also be denoted by $S$. 

To a tuple
\begin{equation}
\mathbf{S}=\threebyone{S_0}{\vdots}{S_{u-1}}
\end{equation}
of subsets of $[n]$ is associated the $u \times n$ matrix of zeros and ones with rows $S_0,\dots,S_{u-1}$. This indicator matrix will also be denoted by $\mathbf{S}$. 

\subsection{Incomplete permutation matrices}\label{sec:IncompletePermutationMatrices}

Take
$
l,n \in\N, S \subset [n]
$, 
injective 
$
\alpha:S\rightarrow [l]
$
To $l,n,S,\alpha$ associate the matrix
$
\GP{S} \in \F^{l\times n}
$
that has one in positions $(\alpha(i),i),i\in S$ and zero everywhere else. To keep the notation simple, $l,n,\alpha$ are left implicit. One may call $\GP{S}$ an incomplete permutation matrix with column support $S$. 

An immediate consequence of the definition is
$
\GP{S_1} \otimes \GP{S_2} = \GP{S_1 \times S_2}
$.
Indeed, both $\GP{S_1} \otimes \GP{S_2}$ and $\GP{S_1 \times S_2}$ are zero everywhere except in positions with row index $(\alpha_1(i),\alpha_2(j))$ and column index $(i,j)$ for $i\in S_1, j \in S_2$. 

Besides this, the following will also be useful:
\begin{theorem}\label{thm:LayersOfIncompletePermutations}
Take
$
n,l_0,\dots,l_k\in\N
$
and let
$l=\sum_{i=0}^k l_i$. 
Take
$
S_0, \dots S_k \subset [n]  
$
and let $T=\cup_{i=0}^k S_i$. 
Take injective
$
\alpha_i:S_i\rightarrow [l_i]
$ and $\alpha:T\rightarrow [l]$.
Then, there exists invertible $\Lambda(\mathbf{S}) \in \F^{l \times l}$ such that 
\begin{equation}
\threebyone{\GP{S_0}}{\vdots}{\GP{S_k}}=\Lambda(\mathbf{S})\GP{T}
\end{equation}
where, to keep the notation simple, the dependence of $\Lambda$ on
\newline 
$
n,l_0,\dots,l_k,\alpha_0,\dots,\alpha_k,\alpha
$
is left implicit. 
\end{theorem}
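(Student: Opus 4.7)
The plan is to construct $\Lambda(S)$ one column at a time, matching the two sides of the claimed identity column by column, and then to fill in the freely chosen columns so as to make $\Lambda(S)$ invertible. Observe that, by the definition of $\GP{T}$, column $j$ of $\GP{T}$ equals the standard basis vector $e_{\alpha(j)} \in \F^l$ when $j \in T$ and equals zero when $j \notin T$. Consequently, column $j$ of $\Lambda(S)\GP{T}$ equals column $\alpha(j)$ of $\Lambda(S)$ in the first case and zero in the second. The corresponding column of the stacked left-hand side is zero when $j \notin T$ (since $j$ then lies in none of the $S_i$), and otherwise is the vector $v_j \in \F^l$ that carries a $1$ in position $\alpha_i(j) + \sum_{i' < i} l_{i'}$ for every $i$ with $j \in S_i$ and zeros elsewhere. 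Matching the two columns forces me to set column $\alpha(j)$ of $\Lambda(S)$ equal to $v_j$ for every $j \in T$; since $\alpha$ is injective on $T$, this prescribes $|T|$ distinct columns of $\Lambda(S)$ and leaves the remaining $l - |T|$ columns (indexed by $[l]\setminus\alpha(T)$) unconstrained.

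It then suffices to show that $\{v_j : j \in T\}$ is linearly independent in $\F^l$: with that in hand, I can extend it to a basis of $\F^l$ and use the extension vectors to fill the unconstrained columns of $\Lambda(S)$, making it invertible by construction. For the independence, given a vanishing combination $\sum_{j\in T} c_j v_j = 0$, I pick any $j_0 \in T$, select $i_0$ with $j_0 \in S_{i_0}$ (which exists because $T = \cup_i S_i$), and read off the coordinate $r = \alpha_{i_0}(j_0) + \sum_{i' < i_0} l_{i'}$ of the sum. A summand $v_j$ contributes $1$ in coordinate $r$ iff $j \in S_{i_0}$ and $\alpha_{i_0}(j) = \alpha_{i_0}(j_0)$; injectivity of $\alpha_{i_0}$ on $S_{i_0}$ forces $j = j_0$, so coordinate $r$ of the sum equals $c_{j_0}$ and must vanish.

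The main thing to be careful about is really just the bookkeeping: tracking which block of rows of the stacked matrix corresponds to which $\GP{S_i}$ through the offsets $\sum_{i'<i} l_{i'}$, and keeping track of which injectivity assumption is doing the work at each step (injectivity of the individual $\alpha_{i_0}$ in the independence argument, and injectivity of $\alpha$ to argue that the $|T|$ prescribed columns of $\Lambda(S)$ are actually distinct). Once those are straight, the remaining steps — the column-matching identification and the basis extension — are routine.
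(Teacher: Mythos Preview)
Your proof is correct. The column-matching argument is sound: for $j\in T$ the column identity forces column $\alpha(j)$ of $\Lambda(S)$ to be the stacked indicator vector $v_j$, and your independence argument for $\{v_j:j\in T\}$ via the block offsets and the injectivity of $\alpha_{i_0}$ is exactly right. Basis extension then legitimately fills the remaining $l-|T|$ columns.

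The paper takes a different, more constructive route: it writes $\Lambda(S)$ explicitly as a product (permutation) $\times$ (lower triangular with ones on the diagonal) $\times$ (permutation). Concretely, the first permutation puts the $|T|$ nonzero rows of $\GP{T}$ into the top positions, the lower-triangular matrix duplicates row $j$ exactly $b_j-1$ additional times (where $b_j$ is the number of $S_i$'s containing the $j$-th element of $T$), and the final permutation shuffles these copies into the correct block positions. Invertibility is then immediate from the factorization. Your approach is arguably cleaner linear algebra and avoids the bookkeeping of ordering the elements of $T$ and tracking where the duplicated rows go; the paper's approach, on the other hand, yields an explicit structural description of $\Lambda(S)$ (permutation-equivalent to a unipotent lower-triangular matrix) rather than a mere existence statement. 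Since the only property of $\Lambda(S)$ used downstream in the paper is its invertibility, either argument suffices.
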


\begin{proof}
Let
$
a_0 < a_1 < \dots < a_r
$
be the elements of $T$. 
For $j=0, \dots, r$, let
$
b_j=|\{i:a_j\in S_i\}|
$
be the number of subsets in which $a_j$ appears. 

The matrix $\Lambda(\mathbf{S})$ is obtained as a product of two permutation matrices and a lower triangular matrix:
\begin{enumerate}
\item The first permutation matrix shuffles the rows of $\GP{T}$ so that the non-zero elements are in positions $(j,a_j),j=0,\dots,r$. 
\item The lower triangular matrix copies $(b_j-1)$ times row $j$ for $j=0,\dots,r$. 
\item The second permutation matrix shuffles the rows to their correct final positions, specified by the injections $\alpha_0,\dots, \alpha_k$ and the order of the sets $S_0,\dots, S_k$.
\end{enumerate}
\end{proof}

\textbf{Example:} Let $n=4$, $l_0=l_1=2$. Then, $l=4$. Let the tuple $\mathbf{S}$ consist of $S_0=\{0,1\}$ and $S_1=\{0,2\}$. Then, $T=\{0,1,2\}$. Let the injection $\alpha_0:S_0\rightarrow[2]$ be $\alpha_0(0)=0, \alpha_0(1)=1$. Then, the incomplete permutation matrix corresponding to $S_0,\alpha_0$ is 
\begin{equation}
\GP{S_0}=\twobyfour{1}{0}{0}{0}{0}{1}{0}{0}
\end{equation}
Let the injection $\alpha_1:S_1\rightarrow[2]$ be $\alpha_1(0)=0, \alpha_1(2)=1$. Then, the incomplete permutation matrix corresponding to $S_1,\alpha_1$ is 
\begin{equation}
\GP{S_1}=\twobyfour{1}{0}{0}{0}{0}{0}{1}{0}
\end{equation}
Let the injection $\alpha:T\rightarrow[4]$ be $\alpha(0)=0,\alpha(1)=1,\alpha(2)=2$. Then, the incomplete permutation matrix corresponding to $T,\alpha$ is 
\begin{equation}
\GP{T}=
\begin{pmatrix}
1&0&0&0\\
0&1&0&0\\
0&0&1&0\\
0&0&0&0
\end{pmatrix}
\end{equation}
Applying the above argument to $\GP{S_0},\GP{S_1},\GP{T}$ gives $a_0=0,a_1=1,a_2=2$ as the elements of $T$ in increasing order, $b_0=2,b_1=1,b_2=1$ as the number of subsets in which they appear, and
\begin{equation}
\twobyone{\GP{S_0}}{\GP{S_1}} = 
\begin{pmatrix}
1&0&0&0\\
0&1&0&0\\
0&0&0&1\\
0&0&1&0
\end{pmatrix}
\begin{pmatrix}
1&0&0&0\\
0&1&0&0\\
0&0&1&0\\
1&0&0&1
\end{pmatrix}
\begin{pmatrix}
1&0&0&0\\
0&1&0&0\\
0&0&1&0\\
0&0&0&1
\end{pmatrix}
\GP{T}
\end{equation}
as the two permutation matrices and the lower triangular matrix that relate $\GP{T}$ to $\GP{S_0},\GP{S_1}$.

\subsection{Joint decomposition of a pair of orthogonal matrices}\label{sec:JointDecompositionForPairOfOrthogonalMatrices}

An important tool in linear algebra is the decomposition of a matrix into a product of two permutation matrices, a lower triangular, an upper triangular, and a diagonal matrix. This section shows that a joint decomposition exists for a pair of orthogonal matrices. 

\begin{theorem}\label{thm:JointDecompositionForPairOfOrthogonalMatrices}
Take
$
A \in \F^{m_a \times n}, B \in \F^{m_b \times n}
$
such that $AB^T=0$. 
Denote the ranks of $A,B$ by $r_a,r_b$. 
Then, there exist
\begin{enumerate}
\item permutation matrices $P_a \in \F^{m_a \times m_a},P_b \in \F^{m_b \times m_b},Q \in \F^{n \times n}$, 
\item lower triangular $L_a\in \F^{m_a \times m_a}$ with ones along the diagonal.
\item upper triangular $L_b \in \F^{m_b \times m_b},R\in\F^{n \times n}$ with ones along the diagonal. 
\item $D_a \in \F^{m_a \times n}$ that has a one in the $(i,i)$ entry for $i=0,\dots r_a-1$ and zero everywhere else. 
\item $D_b \in \F^{m_b \times n}$ that has a one in the $(m_b-r_b+i,n-r_b+i)$ entry for $i=0,\dots r_b-1$ and zero everywhere else. 
\end{enumerate}
such that 
\begin{align}
P_aAQ &= L_a D_a R \\
P_bBQ &= L_b D_b R^{-T}
\end{align}
where $R^{-T}=(R^{-1})^T$ is the transpose of the inverse of $R$. 
\end{theorem}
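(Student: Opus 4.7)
My approach is to construct a single column change of coordinates (captured by the permutation $Q$ and the unit upper triangular $R$) that simultaneously puts the row space of $A$ into the span of the first $r_a$ new coordinates and the row space of $B$ into the span of the last $r_b$; the decomposition is then completed by independent row operations on each matrix. The enabling observation is that $AB^T=0$ gives $\mathrm{RS}(A)\subseteq\mathrm{RS}(B)^\perp$, hence the dimension inequality $r_a+r_b\le n$, which is exactly what is needed for the prescribed ``top-left'' support of $D_a$ and ``bottom-right'' support of $D_b$ to be jointly compatible (indeed $D_a D_b^T=0$ in that case, matching $AB^T=0$).

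For the column coordinates I would build a basis $r_0,\ldots,r_{n-1}$ of $\F^n$ nested with the chain $\mathrm{RS}(A)\subseteq\mathrm{RS}(B)^\perp\subseteq\F^n$, using reduced row echelon form at each stage. The RREF of $\mathrm{RS}(A)$ produces $r_a$ pivot columns $C_1$; extending in RREF to a basis of $\mathrm{RS}(B)^\perp$ adds $n-r_a-r_b$ further pivot columns $C_2$ disjoint from $C_1$; and $C_3=[n]\setminus(C_1\cup C_2)$ is completed using the standard basis vectors $e_c$ for $c\in C_3$. Take $Q$ to be the permutation reordering the columns of $[n]$ as $C_1,C_2,C_3$. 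Then the matrix $R$ whose rows are these basis vectors, with columns permuted by $Q$, is unit upper triangular; moreover $AQR^{-1}$ has support in the first $r_a$ columns (from the span property of $\mathrm{RS}(A)$) and $BQR^T$ has support in the last $r_b$ columns (from $\mathrm{RS}(B)\perp\mathrm{RS}(B)^\perp$).

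The last step is an LU-type row reduction on each side. The first $r_a$ columns of $AQR^{-1}$ form a rank-$r_a$ block $M$; after appropriate pivoting one obtains a row permutation $P_a$ and unit lower triangular $L_a$ such that $P_a M$ equals the first $r_a$ columns of $L_a$, which rearranges to $P_a A Q=L_a D_a R$. Symmetrically, the last $r_b$ columns of $BQR^T$ form a rank-$r_b$ block; working from the bottom with unit upper triangular $L_b$ yields $P_b B Q=L_b D_b R^{-T}$.

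The main technical obstacle is that row permutation together with unit lower (resp.\ upper) triangular row operations does not, for an arbitrary rank-$r$ rectangular block, suffice to reduce it to the required identity form---standard LU typically needs both row and column pivoting. The resolution is to replace the naive RREF construction of $R$ with one coordinated with an LU reduction of $A$ and symmetrically of $B$: the first $r_a$ rows of $R$ should be the pivot rows of an LU reduction of $AQ$, and the last $r_b$ rows should come from an LU reduction of $BQ$ done from the bottom. What makes this coordination possible is precisely the hypothesis $AB^T=0$, which forces the two LU reductions to admit a common column permutation $Q$ and a single upper triangular $R$ that simultaneously records the column operations on $A$ and their inverse-transposes on $B$.
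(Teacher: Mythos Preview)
Your overall strategy---build a single column change of basis adapted to the chain $\mathrm{RS}(A)\subseteq\mathrm{RS}(B)^\perp\subseteq\F^n$, then finish with independent row operations---is sound in principle, and the first three paragraphs set it up correctly. But the proof is incomplete: you yourself identify the genuine obstruction (a row permutation plus one-sided triangular row operations cannot reduce an arbitrary full-rank rectangular block to $[I;0]$ or $[0;I]$), and your final paragraph does not actually resolve it. Saying that the first $r_a$ rows of $R$ ``should be the pivot rows of an LU reduction of $AQ$'' and the last $r_b$ ``should come from an LU reduction of $BQ$ done from the bottom'' is a wish, not a construction: the pivot columns chosen for $A$ constrain $Q$ and the upper-triangular shape of $R$, and it is not clear that a subsequent LU reduction of $B$ can be carried out \emph{without} further column permutations that would destroy the upper-triangular form of $R$ already fixed by the $A$-step. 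The concluding sentence---that $AB^T=0$ ``forces the two LU reductions to admit a common column permutation $Q$ and a single upper triangular $R$''---is precisely the content of the theorem, so asserting it is circular.

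The paper avoids this coordination problem by an explicit induction on $\max(m_a,m_b,n)$: it peels off one pivot of $A$ (top-left) and one pivot of $B$ (bottom-right) simultaneously, writes down the $3\times 3$ block form of $R$ and uses $AB^T=0$ to verify directly that $R^{-T}$ has the required structure, then shows that any further permutations needed for the induction can be commuted past the triangular factors. In effect, the paper \emph{interleaves} the two LU reductions one step at a time, so the compatibility you assert globally is instead checked locally and propagated. If you want to salvage your global approach, the missing lemma is: after doing full LU on $A$ to fix $Q$ and the first $r_a$ rows of $R$, the matrix $BQR_1^T$ (which has zero in its first $r_a$ columns by orthogonality) admits a ``reverse'' LU whose column operations can be absorbed into the last $n-r_a$ rows of $R$ without disturbing the first $r_a$ rows or the upper-triangular shape. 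This is true, but it needs to be argued.
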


\begin{proof}
If both $A,B$ are zero, the theorem holds trivially. 

If one of $A,B$ is zero, the claim follows from the usual decomposition of the non-zero matrix. 

If both $A,B$ are non-zero, then take $i,j$ such that entry $i,j$ of $A$ is one, and take $k$ such that row $k$ of $B$ is non-zero. The assumption $AB^T=0$ implies row $i$ of $A$ and row $k$ of $B$ are orthogonal. Then, there exists $l \neq j$ such that entry $k,l$ of $B$ is one.
Then, there exist permutation matrices $P_a,P_b,Q$ such that $P_aAQ$ and $P_bBQ$ have the block form
\begin{align}
P_aAQ &= \twobythree{1}{A_{12}}{A_{13}}{A_{21}}{A_{22}}{A_{23}} \\
P_bBQ &= \twobythree{B_{11}}{B_{12}}{B_{13}}{B_{21}}{B_{22}}{1}
\end{align}
where the number of rows in the blocks is $(1,m_a-1)$ for $A$ and $(m_b-1,1)$ for $B$, and the number of columns in the blocks is $(1,n-2,1)$ for both. 

Now take 
\begin{equation}
R=\threebythree{1}{A_{12}}{B_{21}}{0}{I}{B_{22}^T}{0}{0}{1}
\end{equation}
where the number of rows and of columns in the blocks is $(1,n-2,1)$. 

Note that $AB^T=0$ implies
\begin{equation}
0 =P_aAQQ^TB^TP_b^T = \twobythree{1}{A_{12}}{A_{13}}{A_{21}}{A_{22}}{A_{23}} \threebytwo{B_{11}^T}{B_{21}^T}{B_{12}^T}{B_{22}^T}{B_{13}^T}{1}
\end{equation}
The resulting relations for the blocks $A_{ij},B_{kl}$ can be used to verify that
\begin{align}
R^{-T} &= \threebythree{1}{0}{0}{A_{12}^T}{I}{0}{A_{13}}{B_{22}}{1} \\
P_aAQR &= \twobythree{1}{0}{0}{A_{21}}{A_{21}A_{12}+A_{22}}{0} \\
P_bBQR^{-T} &=\twobythree{0}{B_{12}+B_{13}B_{22}}{B_{13}}{0}{0}{1}
\end{align}

Now, take
\begin{align}
L_a&=\twobytwo{1}{0}{A_{21}}{I} \\
L_b &= \twobytwo{I}{B_{13}}{0}{1}
\end{align}
where the number of rows and columns in the blocks is $(1,m_a-1)$ for $L_a$ and $(m_b-1,1)$ for $L_b$. 

Note that
\begin{align}
L_aP_aAQR &= \twobythree{1}{0}{0}{0}{A_{21}A_{12}+A_{22}}{0} \\
L_bP_bBQR^{-T} &=\twobythree{0}{B_{12}+B_{13}B_{22}}{0}{0}{0}{1}
\end{align}

From here, the proof can be completed by induction on $max(m_a,m_b,n)$. First, note that $$L_aP_aAQR(L_bP_bBQR^{-T})^T=L_aP_aAB^TP_b^TL_b^T=0 $$ and therefore $(A_{21}A_{12}+A_{22})(B_{12}+B_
{13}B_{22})^T=0$. By the induction hypothesis, 
\begin{multline}
L_aP_aAQR=\twobytwo{1}{0}{0}{{P_a'}^{-1}} \twobytwo{1}{0}{0}{L_a'}\twobythree{1}{0}{0}{0}{D_a'}{0}\\ *\threebythree{1}{0}{0}{0}{R'}{0}{0}{0}{1}\threebythree{1}{0}{0}{0}{{Q'}^{-1}}{0}{0}{0}{1}
\end{multline}
\begin{multline}
L_bP_bBQR^{-T}=\twobytwo{{P_b'}^{-1}}{0}{0}{1}\twobytwo{L_b'}{0}{0}{1}\twobythree{0}{D_b'}{0}{0}{0}{1}\\ *\threebythree{1}{0}{0}{0}{{R'}^{-T}}{0}{0}{0}{1}\threebythree{1}{0}{0}{0}{{Q'}^{-1}}{0}{0}{0}{1}
\end{multline}
for suitable $L_a',L_b',R',P_a',P_b',Q',D_a',D_b'$.

Now, move $P_a',P_b',Q'$ to the other side, past $L_a,L_b,R,R^{-T}$ and merge them with $P_a,P_b,Q$. Moving $P_a',P_b',Q'$ past $L_a,L_b,R,R^{-T}$ is acheived using the relations:
\begin{align}
\twobytwo{1}{0}{0}{P'_a}\twobytwo{1}{0}{A_{21}}{I} &= \twobytwo{1}{0}{P'_aA_{21}}{I}\twobytwo{1}{0}{0}{P'_a} \\
\twobytwo{P'_b}{0}{0}{1}\twobytwo{I}{B_{13}}{0}{1} &= \twobytwo{I}{P'_bB_{13}}{0}{1}\twobytwo{P'_b}{0}{0}{1} \\
\threebythree{1}{A_{12}}{B_{21}}{0}{I}{B_{22}^T}{0}{0}{1}\threebythree{1}{0}{0}{0}{Q'}{0}{0}{0}{1} &= \threebythree{1}{0}{0}{0}{Q'}{0}{0}{0}{1}\threebythree{1}{A_{12}Q'}{B_{21}}{0}{I}{(Q')^{-1}B_{22}^T}{0}{0}{1} \\
\threebythree{1}{0}{0}{A_{12}^T}{I}{0}{A_{13}}{B_{22}}{1}\threebythree{1}{0}{0}{0}{Q'}{0}{0}{0}{1} &= \threebythree{1}{0}{0}{0}{Q'}{0}{0}{0}{1}\threebythree{1}{0}{0}{(Q')^{-1}A_{12}^T}{I}{0}{A_{13}}{B_{22}Q'}{1} 
\end{align}
Finally, move the so transformed $L_a,L_b,R,R^{-T}$ to the other side and merge them with $L_a',L_b',R',{R'}^{-T}$. This completes the inductive step and the proof. 
\end{proof}

\subsection{Matrices with layers of tensor products and their decompositions}\label{sec:MatricesWithLayersOfTensorProducts}

Take a tuple of matrices
$
\mathbf{H}=(H_0,\dots,H_{m-1}), \;\; H_i \in \F^{l_i \times n_i},i=0,\dots,m-1
$.
Take a subset $X \subset [m]$. To the pair $\mathbf{H},X$ associate the matrix 
\begin{equation}
M(\mathbf{H},X)=\otimes_{i=0}^{m-1}
\begin{cases}
H_i & \text{if } i \in X\\
I_{n_i} & \text{otherwise}
\end{cases}
\end{equation}

\textbf{Example:} $m=2, H_0=H_1=\onebytwo{1}{1}, X=\{0\}$, 
\begin{equation}
M(\mathbf{H},X)=\onebytwo{1}{1} \otimes \twobytwo{1}{0}{0}{1} = \twobyfour{1}{0}{1}{0}{0}{1}{0}{1}
\end{equation}

Now, suppose that tuples of permutation matrices $\mathbf{P},\mathbf{Q}$, tuples of invertible matrices $\mathbf{L},\mathbf{R}$, and a tuple of diagonal matrices $\mathbf{D}$ are known such that the decompositions
$
P_iH_iQ_i=L_iD_iR_i, \;\; i=0,\dots,m-1
$
hold. From these, obtain the decomposition
\begin{equation}\label{eq:LDUforSingleLayer}
P(\mathbf{H},X)M(\mathbf{H},X)Q(\mathbf{H})=L(\mathbf{H},X)D(\mathbf{H},X)R(\mathbf{H})
\end{equation}
where
\begin{align}
P(\mathbf{H},X) &= \otimes_{i=0}^{m-1}
\begin{cases}
P_i & \text{if } i \in X\\
Q_i^{-1} & \text{otherwise}
\end{cases} \\
Q(\mathbf{H}) &= \otimes_{i=0}^{m-1} Q_i \\
L(\mathbf{H},X) &= \otimes_{i=0}^{m-1}
\begin{cases}
L_i & \text{if } i \in X\\
R_i^{-1} & \text{otherwise}
\end{cases}\\
D(\mathbf{H},X) &= \otimes_{i=0}^{m-1}
\begin{cases}
D_i & \text{if } i \in X\\
I_{n_i} & \text{otherwise}
\end{cases}\\
R(\mathbf{H}) &= \otimes_{i=0}^{m-1} R_i
\end{align}
To keep the notation simple, assume that the particular decompositions used for the matrices in $\mathbf{H}$ are known from context and do not need to be specified explicitly. 

\textbf{Example:} Take the decomposition
\begin{equation}
\onebytwo{1}{1}=1\onebytwo{1}{0}\twobytwo{1}{1}{0}{1}
\end{equation}
and take $m=2, H_0=H_1=\onebytwo{1}{1}, X=\{0\}$ as before. Then,
\begin{multline}
M(\mathbf{H},X)=\onebytwo{1}{1} \otimes \twobytwo{1}{0}{0}{1} \\= \left(1 \otimes \twobytwo{1}{1}{0}{1}\right)\left(\onebytwo{1}{0}\otimes\twobytwo{1}{0}{0}{1}\right)\left(\twobytwo{1}{1}{0}{1}\otimes\twobytwo{1}{1}{0}{1}\right)
\end{multline}

Now, take a tuple of matrices $\mathbf{H}$ as before, but this time take a tuple of subsets of $[m]$
\begin{equation}
\mathbf{X}=\threebyone{X_0}{\vdots}{X_{k-1}},\;\;X_i \subset [m], i=0,\dots,k-1
\end{equation}
To the pair $\mathbf{H},\mathbf{X}$, associate the matrix
\begin{equation}\label{eq:LayeredTensorProductMatrixNotation}
M(\mathbf{H},\mathbf{X})=\threebyone{M(\mathbf{H},X_0)}{\vdots}{M(\mathbf{H},X_{k-1})}
\end{equation}
obtained by taking $M(\mathbf{H},X_0), \dots, M(\mathbf{H},X_{k-1})$ as layers of rows. 

The matrix $M(\mathbf{H},\mathbf{X})$ when $\mathbf{X}$ is a tuple of subsets can also be decomposed as
\begin{equation}\label{eq:LDUforLayersOfTensorProducts}
P(\mathbf{H},\mathbf{X})M(\mathbf{H},\mathbf{X})Q(\mathbf{H})=L(\mathbf{H},\mathbf{X})D(\mathbf{H},\mathbf{X})R(\mathbf{H})
\end{equation}
where
\begin{align}
P(\mathbf{H},\mathbf{X}) &= \threebythree{P(\mathbf{H},X_0)}{}{}{}{\ddots}{}{}{}{P(\mathbf{H},X_{k-1})} \\
Q(\mathbf{H}) &= \otimes_{i=0}^{m-1} Q_i \\
L(\mathbf{H},\mathbf{X}) &= \threebythree{L(\mathbf{H},X_0)}{}{}{}{\ddots}{}{}{}{L(\mathbf{H},X_{k-1})}\Lambda(\mathbf{S})\\
D(\mathbf{H},\mathbf{X}) &= \GP{T}\\
R(\mathbf{H}) &= \otimes_{i=0}^{m-1} R_i
\end{align}
and where $S_i$ is the column support of $D(\mathbf{H},X_i)$ viewed as an incomplete permutation matrix, $T=\cup_i S_i$, and $\Lambda(\mathbf{S})$ is defined in Theorem \ref{thm:LayersOfIncompletePermutations} in section  \ref{sec:IncompletePermutationMatrices}. 

\textbf{Example:} Take the decomposition
\begin{equation}
\onebytwo{1}{1}=1\onebytwo{1}{0}\twobytwo{1}{1}{0}{1}
\end{equation}
and take $m=2, H_0=H_1=\onebytwo{1}{1}$ as before. Take $X_0=\{0\}, X_1=\{1\}$. The matrix $M(\mathbf{H},\mathbf{X})$ has two layers; decomposing each layer separately gives
\begin{equation}
M(\mathbf{H},\mathbf{X})=\twobyone{M(\mathbf{H},X_0)}{M(\mathbf{H},X_1)}=\twobytwo{L(\mathbf{H},X_0)}{0}{0}{L(\mathbf{H},X_1)}\twobyone{D(\mathbf{H},X_0)}{D(\mathbf{H},X_1)}R(\mathbf{H})
\end{equation} 
The matrix 
\begin{equation}
D(\mathbf{H},X_0) = \onebytwo{1}{0}\otimes\twobytwo{1}{0}{0}{1}=\twobyfour{1}{0}{0}{0}{0}{1}{0}{0}
\end{equation}
is an incomplete permutation matrix with column support $S_0=\{0,1\}$. The matrix 
\begin{equation}
D(\mathbf{H},X_1) = \twobytwo{1}{0}{0}{1}\otimes\onebytwo{1}{0}=\twobyfour{1}{0}{0}{0}{0}{0}{1}{0}
\end{equation}
is an incomplete permutation matrix with column support $S_1=\{0,2\}$. The example after Theorem \ref{thm:LayersOfIncompletePermutations} in section  \ref{sec:IncompletePermutationMatrices} shows that 
\begin{equation}
\twobyone{D(\mathbf{H},X_0)}{D(\mathbf{H},X_1)}=\begin{pmatrix}
1&0&0&0\\
0&1&0&0\\
1&0&0&1\\
0&0&1&0
\end{pmatrix}
\begin{pmatrix}
1&0&0&0\\
0&1&0&0\\
0&0&1&0\\
0&0&0&0
\end{pmatrix}
=\Lambda(\mathbf{S}) \GP{T}
\end{equation}
Thus, the decomposition of $M(\mathbf{H},\mathbf{X})$ from equation \eqref{eq:LDUforLayersOfTensorProducts} has $P(\mathbf{H},\mathbf{X})=Q(\mathbf{H})=I_4$, and
\begin{align}
L(\mathbf{H},\mathbf{X})&=\twobytwo{L(\mathbf{H},X_0)}{0}{0}{L(\mathbf{H},X_1)}\Lambda(\mathbf{S})=
\begin{pmatrix}
1&1&0&0\\
0&1&0&0\\
1&0&1&1\\
0&0&1&0
\end{pmatrix}\\
D(\mathbf{H},\mathbf{X})&=\GP{T}=\begin{pmatrix}
1&0&0&0\\
0&1&0&0\\
0&0&1&0\\
0&0&0&0
\end{pmatrix}\\
R(\mathbf{H})&=\twobytwo{1}{1}{0}{1}\otimes\twobytwo{1}{1}{0}{1}=\begin{pmatrix}
1&1&1&1\\
0&1&0&1\\
0&0&1&1\\
0&0&0&1
\end{pmatrix}
\end{align}

\subsection{The partial order on integer tuples}\label{sec:PartialOrder}

For $x,y \in \Z^n$,
$
x<y \Leftrightarrow \AND{\forall i, x_i \leq y_i}{\exists i, x_i < y_i}
$
and this defines a partial order on $\Z^n$; for example, $(1,2,3)<(2,3,4)<(3,4,5)$, but $(1,2,3)$ and $(4,3,2)$ are incomparable. 

Now, take $n_0,\dots,n_{k-1} \in \N$ and consider the set
$
[\vec{n}]=[n_0] \times [n_2] \times \dots \times [n_{k-1}]
$
This set inherits from $\Z^k$ the partial order. 

A subset $T \subset [\vec{n}]$ is called increasing if
$
\AND{x \in T}{x<y} \Rightarrow y \in T
$
and decreasing if 
$
\AND{x \in T}{y<x} \Rightarrow y \in T
$.The minimal elements of $T$ are 
$
\MIN{T}=\{x\in T: \forall y \in T, \NOT{y<x}\}
$
and the maximal elements of $T$ are 
$
\MAX{T}=\{x\in T: \forall y \in T, \NOT{x<y}\}
$.
The increasing subset generated by $T $ is 
$
\GUP{T}=\{x \in S:\exists y \in T,y \leq x\}
$
and the decreasing subset generated by $T$ is 
$
\GDO{T}=\{x \in S:\exists y \in T,x \leq y\}
$.

\subsection{Quantum stabilizer codes}

\subsubsection{The Pauli group}

The Pauli matrices are 
\begin{equation}
\PauliX=\twobytwo{0}{1}{1}{0} \;\; \PauliY=\twobytwo{0}{-i}{i}{0} \;\; \PauliZ=\twobytwo{1}{0}{0}{-1}
\end{equation}
The Pauli group on $n$ qubits is
$$
\PauliGroup{n}=\left\{w \PauliX^{u_1}\PauliZ^{v_1} \otimes \dots \otimes \PauliX^{u_n}\PauliZ^{v_n}:w\in\{\pm1,\pm i\},u,v \in \F^{1 \times n} \right\}
$$
The shorthand notation
$
\PauliX^u\PauliZ^v=\PauliX^{u_1}\PauliZ^{v_1} \otimes \dots \otimes \PauliX^{u_n}\PauliZ^{v_n}
$
will be used below. 

The map
$
\PauliGroupHomomorphism :\PauliGroup{n}\rightarrow\F^{1\times 2n},  \PauliGroupHomomorphism(w\PauliX^u\PauliZ^v)=\onebytwo{u}{v}
$
is a surjective group homomorphism with kernel $\{\pm I, \pm iI\}$. 
Two elements $w\PauliX^u\PauliZ^v,w'\PauliX^{u'}\PauliZ^{v'}$ of the Pauli group anti-commute if 
$
\PauliGroupHomomorphism(w\PauliX^u\PauliZ^v)\SymplecticProductMatrix\PauliGroupHomomorphism(w'\PauliX^{u'}\PauliZ^{v'})^T=1
$
and commute if 
$
\PauliGroupHomomorphism(w\PauliX^u\PauliZ^v)\SymplecticProductMatrix\PauliGroupHomomorphism(w'\PauliX^{u'}\PauliZ^{v'})^T=0
$
where
\begin{equation}
\SymplecticProductMatrix=\twobytwo{0}{I}{I}{0}
\end{equation}

\subsubsection{Stabilizer codes}
An abelian subgroup $\mathcal{S}$ of $\PauliGroup{n}$ that does not contain $-I$ is called a stabilizer subgroup. The joint $+1$-eigenspace of the elements of $\mathcal{S}$ is a quantum error correcting code. If elements $s_1,\dots,s_m \in \mathcal{S}$ generate $\mathcal{S}$, then the matrix $H$ with rows $\PauliGroupHomomorphism(s_1),\dots,\PauliGroupHomomorphism(s_m)$ is a parity check matrix for the code stabilized by $\mathcal{S}$. If a quantum state is encoded in the subspace stablized by $\mathcal{S}$, a Pauli error $E \in \PauliGroup{n}$ occurs, and $s_1, \dots, s_m$ are measured, then the resulting syndrome is $H\SymplecticProductMatrix \PauliGroupHomomorphism(E)^T$. 

Conversely, any matrix $H$ such that $H\SymplecticProductMatrix H^T=0$ is the parity check matrix of some stabilizer code. 

\subsubsection{The Clifford group and the symplectic group}

The Clifford group on $n$ qubits is the normalizer of the Pauli group in the $2^n\times 2^n$ unitary matrices:
$
\CliffordGroup{n}=\Normalizer{\PauliGroup{n}}{\UnitaryGroup{2^n}}=\left\{U\in\UnitaryGroup{2^n}:U\PauliGroup{n}=\PauliGroup{n}U\right\}
$.
Each Clifford group element can be expressed as a product of CNOT, Hadamard and Phase gates and a global phase. 

The symplectic group consists of matrices which preserve the symplectic product:
$
\SymplecticGroup{n} = \left\{ A \in \F^{2n \times 2n}:A \SymplecticProductMatrix A^T=\SymplecticProductMatrix \right\}
$.

There is a surjective group homomorphism
$
\CliffordGroupHomomorphism : \CliffordGroup{n}\rightarrow\SymplecticGroup{n}
$
with the property:
\begin{equation}\label{eq:PropertyOfCliffordGroupHomomorphism}
\forall P \in \PauliGroup{n}, \; \forall U \in \CliffordGroup{n}, \;\;\; \PauliGroupHomomorphism(U^{-1}PU)=\PauliGroupHomomorphism(P)\CliffordGroupHomomorphism(U)
\end{equation}

\subsubsection{Encoding circuits for stabilizer codes}

For $k \leq n \in \N$, and for $U \in \CliffordGroup{n}$, the Pauli group elements $U^{\dagger} {\PauliZ}_i U, i=1,\dots,n-k$ generate a stabilizer subgroup of $\PauliGroup{n}$, where ${\PauliZ}_i$ denotes $\PauliZ$ acting on the $i$-th qubit. 

Conversely, for each stabilizer subgroup $\mathcal{S}$ of $\PauliGroup{n}$, there exist many $U \in \CliffordGroup{n}$ such that $U^\dagger {\PauliZ}_i U, i=1,\dots,n-k$ generate $\mathcal{S}$. Given one such $U$, an encoding circuit for the code stabilized by $\mathcal{S}$ is the following:
\begin{enumerate}
\item Prepare qubits $1,\dots n-k$ in the state $\ket{0}$ and prepare the remaining $k$ qubits in the state to be encoded.
\item Apply a circuit of CNOT, Hadamard and Phase gates that realizes $U^\dagger$ (up to global phase). 
\end{enumerate}
The quantum code stabilized by $\mathcal{S}$ encodes $k$ qubits in $n$ qubits and is therefore called an $[[n,k]]$ code. 

\subsubsection{Normalizer, distance, logical operators}

The normalizer of a stablizer subgroup $\mathcal{S}$ in $\PauliGroup{n}$ is 
$$
\Normalizer{\mathcal{S}}{\PauliGroup{n}}=\left\{P\in\PauliGroup{n}:P\mathcal{S}=\mathcal{S}P\right\}
$$
The distance of the quantum code stabilized by $\mathcal{S}$ is the lowest weight of an element of $\Normalizer{\mathcal{S}}{\PauliGroup{n}}\backslash \left\langle {iI},\mathcal{S} \right\rangle$, where weight means the number of non-identity elements in the tensor product of Pauli matrices. 

If $U \in \CliffordGroup{n}$ is such that $U^\dagger {\PauliZ}_1 U, \dots, U^\dagger {\PauliZ}_{n-k} U$ generate $\mathcal{S}$, then $U^\dagger {\PauliZ}_i U, i=1, \dots, n$, $U^\dagger {\PauliX}_i U, i=n-k+1,\dots, n$ and $iI$ generate $\Normalizer{\mathcal{S}}{\PauliGroup{n}}$. The Pauli group elements $U^\dagger {\PauliX}_i U, U^\dagger {\PauliZ}_i U, i=n-k+1,\dots,n$ are one possible choice of logical operators for the quantum code stablized by $\mathcal{S}$. 

\subsubsection{Quantum CSS codes}

If a stabilizer subgroup has a set of generators such that each generator contains either only $\PauliX$ and $I$ terms or only $\PauliZ$ and $I$ terms, then the associated quantum code is called a Calderbank-Shor-Steane (CSS) code. A CSS code has a parity check matrix with the block diagonal form 
\begin{equation}
H=\twobytwo{H^x}{0}{0}{H^z}
\end{equation}
where $H^x(H^z)^T=0$. Conversely, for every pair of matrices $H^x, H^z$ such that $H^x(H^z)^T=0$, there is a CSS code that has the parity check matrix $\twobytwo{H^x}{0}{0}{H^z}$. 

A CSS code has an encoding circuit in the following form:
\begin{enumerate}
\item Prepare qubits $1, \dots, m_x$ in the state $\ket{+}$, prepare qubits $m_x+1,\dots,m_x+m_z$ in the state $\ket{0}$, prepare the remaining qubits in the state to be encoded. 
\item Apply a circuit consisting of only CNOT gates. 
\end{enumerate}

To a quantum CSS code are associated two distances
\begin{align}
d_x &= \min\left\{|v|:v\in Ker(H^z), v \notin Im((H^x)^T)\right\} \\
d_z &= \min \left\{|v|:v \in Ker(H^x),v \notin Im((H^z)^T) \right\}
\end{align}
which capture the ability of the code to correct, respectively, $\PauliX$ and $\PauliZ$ errors. The distance of the CSS code when all Pauli errors are considered together is $d=\min(d_x,d_z)$. 

\section{Large CSS code from a tuple of smaller CSS codes and two collections of subsets}\label{sec:GeneralCase}

This section begins by associating a quantum CSS code to two tuples of matrices and two tuples of subsets (subsection \ref{subsec:Definition}). Then, a number of properties are established. Matrix decompositions of the parity check matrices of the code are derived in subsection \ref{sec:MatrixDecompositionCSS}. From these matrix decompositions can be computed the block size and rate (subsection \ref{subsec:BlockSizeAndRate}), a basis for the stabilizer (subsection \ref{subsec:BasisForStabilizer}), an encoding circuit (subsection \ref{subsec:EncodingCircuit}), the normalizer and the logical operators (subsection \ref{subsec:NormalizerLogicalOperators}) and the linear error-correcting code protecting against syndrome errors (subsection \ref{subsec:ProtectionAgainstSyndromeErrors}). Moreover, the syndrome measurements can be parallelized (subsection \ref{subsec:SyndromeMeasurementCircuit}), and the number of qubits in each measurement can be kept low by suitable choices of the components and subsets (subsection \ref{subsec:RowAndColumnWeight}). Throughout the section, the general results are illustrated by a particular small example. 

\subsection{Definition of the CSS code}\label{subsec:Definition}
\begin{definition}
Take two tuples of matrices
\begin{align}
\mathbf{H^x}&=(H^x_0, \dots H^x_{m-1}), H^x_i \in \F^{l^x_i \times n_i}, i=0,\dots,m-1 \\
\mathbf{H^z}&=(H^z_0, \dots H^z_{m-1}), H^z_i \in \F^{l^z_i \times n_i}, i=0,\dots,m-1
\end{align}
such that 
\begin{equation}\label{eq:OrthogonalityCondition}
\forall i, H^x_i (H^z_i)^T=0
\end{equation}

Take two tuples of subsets of $[m]$:
\begin{equation}
\mathbf{X}= \threebyone{X_0}{\vdots}{X_{u-1}} \;\;\; \mathbf{Z}=\threebyone{Z_0}{\vdots}{Z_{v-1}}
\end{equation}
such that 
\begin{equation}\label{eq:FullyIntersectingCondition}
\forall i \forall j, X_i \cap Z_j \neq \emptyset
\end{equation}

To $\mathbf{H^x},\mathbf{H^z},\mathbf{X},\mathbf{Z}$, associate the CSS code with stabilizer parity check matrix
\begin{equation}
CSS(\mathbf{H^x},\mathbf{H^z},\mathbf{X},\mathbf{Z})=\twobytwo{M(\mathbf{H^x},\mathbf{X})}{0}{0}{M(\mathbf{H^z},\mathbf{Z})}
\end{equation}
where the notation of section \ref{sec:MatricesWithLayersOfTensorProducts} is used. 
\end{definition}
\begin{theorem}
$CSS(\mathbf{H^x},\mathbf{H^z},\mathbf{X},\mathbf{Z})$ is a valid stabilizer parity check matrix.
\end{theorem}
\begin{proof}
\eqref{eq:FullyIntersectingCondition} and \eqref{eq:OrthogonalityCondition} imply
\begin{equation}
\forall i \forall j, M(\mathbf{H^x},X_i)M(\mathbf{H^z},Z_j)^T=0
\end{equation}
\end{proof}

\textbf{Running example:} Take $m=4$. Take $\mathbf{H^x},\mathbf{H^z}$ so that for each $i \in [4]$, $H^x_i=H^z_i=\onebytwo{1}{1}$. Take
\begin{equation}
\mathbf{X}=\twobyone{X_0}{X_1}=\twobyone{\{0,1\}}{\{2,3\}} \;\;\; \mathbf{Z}=\twobyone{Z_0}{Z_1}=\twobyone{\{0,2\}}{\{1,3\}}
\end{equation}
The two parity check matrices are
\begin{align}
M(\mathbf{H^x},\mathbf{X})&=\twobyone{\onebytwo{1}{1}\otimes\onebytwo{1}{1}\otimes\twobytwo{1}{0}{0}{1}\otimes\twobytwo{1}{0}{0}{1}}{\twobytwo{1}{0}{0}{1}\otimes\twobytwo{1}{0}{0}{1}\otimes\onebytwo{1}{1}\otimes\onebytwo{1}{1}} \\
M(\mathbf{H^z},\mathbf{Z})&=\twobyone{\onebytwo{1}{1}\otimes\twobytwo{1}{0}{0}{1}\otimes\onebytwo{1}{1}\otimes\twobytwo{1}{0}{0}{1}}{\twobytwo{1}{0}{0}{1}\otimes\onebytwo{1}{1}\otimes\twobytwo{1}{0}{0}{1}\otimes\onebytwo{1}{1}}
\end{align}

\begin{figure}[H]
\centering
\caption{The code $CSS(\mathbf{H^x},\mathbf{H^z},\mathbf{X},\mathbf{Z})$ in the running example}
\label{fig:2dproduct}
\begin{subfigure}{\textwidth}
\centering
\caption{Non-zero entries of $M(\mathbf{H^x},\mathbf{X})$}
\label{fig:2dproductMX}
\begin{tikzpicture}[scale=0.5]
\foreach \x in {0,...,16} {\draw (\x,0) -- (\x,-8);}
\foreach \y in {0,...,8} \draw (0,-\y) -- (16, -\y) ;
\foreach \a in {0,1}
	\foreach \b in {0,1}
		\foreach \c in {0,1}
			\foreach \d in {0,1}
				{
					\tikzmath{\x=8*\a+4*\b+2*\c+\d;}
					\tikzmath{\y=-2*\c-\d;}
					\fill (\x,\y)--(\x+1,\y)--(\x+1,\y-1)--(\x,\y-1)--(\x,\y);
					\tikzmath{\y=-2*\a-\b-4;}
					\fill (\x,\y)--(\x+1,\y)--(\x+1,\y-1)--(\x,\y-1)--(\x,\y);
				}
\end{tikzpicture}
\end{subfigure}
\begin{subfigure}{\textwidth}
\centering
\caption{Non-zero entries of $M(\mathbf{H^z},\mathbf{Z})$}
\label{fig:2dproductMZ}
\begin{tikzpicture}[scale=0.5]
\foreach \x in {0,...,16} {\draw (\x,0) -- (\x,-8);}
\foreach \y in {0,...,8} \draw (0,-\y) -- (16, -\y) ;
\foreach \a in {0,1}
	\foreach \b in {0,1}
		\foreach \c in {0,1}
			\foreach \d in {0,1}
				{
					\tikzmath{\x=8*\a+4*\b+2*\c+\d;}
					\tikzmath{\y=-2*\b-\d;}
					\fill (\x,\y)--(\x+1,\y)--(\x+1,\y-1)--(\x,\y-1)--(\x,\y);
					\tikzmath{\y=-2*\a-\c-4;}
					\fill (\x,\y)--(\x+1,\y)--(\x+1,\y-1)--(\x,\y-1)--(\x,\y);
				}
\end{tikzpicture}
\end{subfigure}
\begin{subfigure}{\textwidth}
\centering
\caption{Tanner graph embedded in the torus. Qubits are denoted by $Q_{abcd}$ for $a,b,c,d \in \{0,1\}$. Checks are denoted by $C^{A_b}_{cd}$ for $A \in \{X,Z\},b,c,d\in\{0,1\}$, so $C^{X_0}_{00}$ is the $00$ row of $M(\mathbf{H^x}, X_0)$, etc.}
\label{fig:2dproducttannergraph}
\begin{tikzpicture}
\foreach \x in {-4,-2,0,2,4} {\draw (\x,-4)--(4,-\x);\draw (\x,4)--(-4,-\x);\draw (\x,4)--(4,\x);\draw (\x,-4)--(-4,\x);}
\foreach \a in {0,1}
	\foreach \b in {0,1}
		\foreach \c in {0,1}
			\foreach \d in {0,1}
				{
					\tikzmath{\x=-3+6*\a+2*\d-4*\a*\d;}
					\tikzmath{\y=-3+6*\b+2*\c-4*\b*\c;}
					\draw (\x,\y) node[fill=white]{$Q_{\a\b\c\d}$};
				}
\foreach \a in {0,1}
	\foreach \b in {0,1}
		{
			\tikzmath{\x=-2+4*\a;}
			\tikzmath{\y=-2+4*\b;}
			\draw (\x,\y) node[fill=white]{$C_{\a\b}^{X_1}$};
		}
\foreach \a in {0,1}
	\foreach \b in {0,1}
		{
			\tikzmath{\x=4-4*\a;}
			\tikzmath{\y=4-4*\b;}
			\draw (\x,\y) node[fill=white]{$C_{\b\a}^{X_0}$};
		}
\foreach \a in {0,1}
	\foreach \b in {0,1}
		{
			\tikzmath{\x=-2+4*\a;}
			\tikzmath{\y=4-4*\b;}
			\draw (\x,\y) node[fill=white]{$C_{\a\b}^{Z_1}$};
		}
\foreach \a in {0,1}
	\foreach \b in {0,1}
		{
			\tikzmath{\x=4-4*\a;}
			\tikzmath{\y=-2+4*\b;}
			\draw (\x,\y) node[fill=white]{$C_{\b\a}^{Z_0}$};
		}
\end{tikzpicture}
\end{subfigure}
\end{figure}

$M(\mathbf{H^x},\mathbf{X})$ is the parity check matrix of the classical 2D product of the $[4,3]$ single parity check code; $M(\mathbf{H^z},\mathbf{Z})$ is isomorphic to $M(\mathbf{H^x},\mathbf{X})$ by row and column permutation. A visualization of $M(\mathbf{H^x},\mathbf{X})$ and $M(\mathbf{H^z},\mathbf{Z})$ is given in Figures \ref{fig:2dproductMX} and \ref{fig:2dproductMZ}. 

The resulting [[16,2,4]] quantum CSS code has been previously considered in \cite{hivadi2018quantum,ostrev2024classicalproduct} from the point of view of using classical 2D products to construct quantum CSS codes. However, it turns out that this is also a toric code.\footnote{The author would like to thank the anonymous reviewer who pointed out that a particular tessellation of the torus also gives a [[16,2,4]] CSS code. It turns out that this is a coincindence not just of the parameters but of the code itself.} Figure \ref{fig:2dproducttannergraph} shows the Tanner graph of $CSS(\mathbf{H^x},\mathbf{H^z},\mathbf{X},\mathbf{Z})$ embedded in the torus. 

\subsection{Matrix decomposition}\label{sec:MatrixDecompositionCSS}

A matrix decomposition of $CSS(\mathbf{H^x},\mathbf{H^z},\mathbf{X},\mathbf{Z})$ follows from the observations in sections \ref{sec:IncompletePermutationMatrices}, \ref{sec:JointDecompositionForPairOfOrthogonalMatrices}, \ref{sec:MatricesWithLayersOfTensorProducts}, \ref{sec:PartialOrder}. From this decomposition follow many properties of the code.

For $\mathbf{H^x},\mathbf{H^z}$ as above, for $i=0,\dots,m-1$, let 
\begin{align}
P^x_i H^x_i Q_i &= L^x_i D^x_i R_i \label{eq:ComponentDecompositionX}\\
P^z_i H^z_i Q_i &= L^z_i D^z_i R_i^{-T} \label{eq:ComponentDecompositionZ}
\end{align}
be the joint decomposition of the pair $H^x_i,H^z_i$ from Theorem \ref{thm:JointDecompositionForPairOfOrthogonalMatrices}.

Using these component decompositions, let 
\begin{align}
P(\mathbf{H^x},\mathbf{X})M(\mathbf{H^x},\mathbf{X})Q(\mathbf{H^x}) &= L(\mathbf{H^x},\mathbf{X})D(\mathbf{H^x},\mathbf{X}) R(\mathbf{H^x}) \label{eq:DecompositionOfMHxX}\\
P(\mathbf{H^z},\mathbf{Z})M(\mathbf{H^z},\mathbf{Z})Q(\mathbf{H^z}) &= L(\mathbf{H^z},\mathbf{Z})D(\mathbf{H^z},\mathbf{Z}) R(\mathbf{H^z})  \label{eq:DecompositionOfMHzZ}
\end{align}
be the resulting decompositions of $M(\mathbf{H^x},\mathbf{X}),M(\mathbf{H^z},\mathbf{Z})$ from equation \eqref{eq:LDUforLayersOfTensorProducts} of section \ref{sec:MatricesWithLayersOfTensorProducts}. 

Recall from section \ref{sec:MatricesWithLayersOfTensorProducts} that $D(\mathbf{H^x},\mathbf{X})$ and $D(\mathbf{H^z},\mathbf{Z})$ are incomplete permutation matrices. The immediate goal is to compute their column support. To do this, recall that equation \eqref{eq:LDUforSingleLayer} in section \ref{sec:MatricesWithLayersOfTensorProducts} gives a matrix decomposition for individual layers $M(\mathbf{H^x},X_i),M(\mathbf{H^z},Z_j)$. The column support of $D(\mathbf{H^x},\mathbf{X})$ is the union of the column supports of $D(\mathbf{H^x},X_i)$, and the column support $D(\mathbf{H^z},\mathbf{Z})$ is the union of the column supports of $D(\mathbf{H^z},Z_j)$. 

The following theorem computes the relevant column supports:
\begin{theorem}\label{thm:ColumnSupports}
Let
\begin{align}
\mathbf{S^x} &= \ones{u} \ones{m}^T Diag(\vec{n}-\ones{m}) - \mathbf{X} Diag(\vec{n}-\vec{r}^x) \label{eq:DecreasingSetGenerators}\\
\mathbf{S^z} &= \mathbf{Z} Diag(\vec{n}-\vec{r}^z) \label{eq:IncreasingSetGenerators}
\end{align}
where $\ones{m}$ is the $m \times 1$ vector of ones, $\vec{n}=(n_0,\dots,n_{m-1})^T$, where $\vec{r}^x=(r^x_0,\dots, r^x_{m-1})^T$, $\vec{r}^z=(r^z_0,\dots,r^z_{m-1})^T$ are the vectors of ranks of the matrices in the tuples $\mathbf{H^x}, \mathbf{H^z}$, and where $Diag(\vec{w})$ is the diagonal matrix obtained from vector $\vec{w}$. Here, $\mathbf{X},\mathbf{Z}$ are interpreted as $u \times m, v \times m$ indicator matrices as in Section \ref{sec:ConventionsForSubsetsAndIndicatorVectors} and $\mathbf{S^x},\mathbf{S^z}$ are interpreted as $u \times m, v\times m$ matrices with each row $S^x_0,\dots,S^x_{u-1}$, $S^z_0,\dots,S^z_{v-1}$ specifying an element of $[\vec{n}]$. 

With $\mathbf{S^x},\mathbf{S^z}$ so defined, the following hold:
\begin{enumerate}
\item For $i=0,\dots,u-1$, the column support of $D(\mathbf{H^x},X_i)$ is $\GDO{\{S^x_i\}} =\{t \in [\vec{n}]:t \leq S^x_i\}$. 
\item The column support of $D(\mathbf{H^x},\mathbf{X})$ is $\GDO{\mathbf{S^x}}=\{t\in[\vec{n}]:\exists i, t \leq S^x_i\}$. 
\item For $j=0,\dots,v-1$, the column support of $D(\mathbf{H^z},Z_j)$ is $\GUP{\{S^z_j\}}=\{t\in[\vec{n}]:t\geq S^z_j\}$.
\item The column support of $D(\mathbf{H^z},\mathbf{Z})$ is $\GUP{\mathbf{S^z}}=\{t\in[\vec{n}]:\exists j, t \geq S^z_j\}$. 
\item $\GDO{\mathbf{S^x}}$ and $\GUP{\mathbf{S^z}}$ are disjoint. 
\end{enumerate}
\end{theorem}

\begin{proof}
\textbf{Part 1:} Take any $i\in[u]$. Recall that 
\begin{equation}
D(\mathbf{H^x},X_i)=\otimes_{k=0}^{m-1} 
\begin{cases}
D^x_k & \text{if } k \in X_i \\
I_{n_k} & \text{otherwise} 
\end{cases}
\end{equation}
Then, column $t \in[\vec{n}]$ of $D(\mathbf{H^x},X_i)$ is non-zero if and only if $t_k$ is in the column support of $D^x_k$ for $k \in X_i$. Recall further that the column support of $D^x_k$ is $0,\dots,r^x_k-1$. Thus, column $t \in[\vec{n}]$ of $D(\mathbf{H^x},X_i)$ is non-zero if and only if 
\begin{equation}
t_k \leq 
\begin{cases}
r^x_k-1 & \text{if } k \in X_i \\
n_k-1 & \text{otherwise}
\end{cases}
\end{equation}
This is the same as $t \leq S^x_i$. 

\textbf{Part 2} follows from Part 1 because the column support of $D(\mathbf{H^x},\mathbf{X})$ is the union of the column supports of $D(\mathbf{H^x},X_i)$ for $i\in[u]$. 

\textbf{Part 3} is similar to Part 1, except that the column support of $D^z_k$ is $n_k-r^z_k, \dots, n_k-1$, so column $t$ of $D(\mathbf{H^z},Z_j)$ is non-zero if and only if 
\begin{equation}
t_k \geq 
\begin{cases}
n_k-r^z_k & \text{if } k \in Z_j \\
0 & \text{otherwise}
\end{cases}
\end{equation}
This is the same as $t \geq S^z_j$. 

\textbf{Part 4} follows from Part 3 because the column support of $D(\mathbf{H^z},\mathbf{Z})$ is the union of the column supports of $D(\mathbf{H^z},Z_j$ for $j\in[v]$. 

\textbf{Part 5:} suppose for a contradiction that $t \in \GDO{\mathbf{S^x}} \cap \GUP{\mathbf{S^z}}$. Take $i,j$ so that $S^z_j \leq t \leq S^x_i$. Take $k \in X_i \cap Z_j$. Then, $n_k-r^z_k \leq t_k \leq r^z_k -1$. Then, $r^x_k+r^z_k \geq n_k+1$, which contradicts $H^x_k (H^z_k)^T=0$. 
\end{proof}

\textbf{Running example:} For the running example of this section, the component decompositions \eqref{eq:ComponentDecompositionX} and \eqref{eq:ComponentDecompositionZ} are
\begin{align}
H^x_i&=\onebytwo{1}{1}=1\onebytwo{1}{0}\twobytwo{1}{1}{0}{1} \\
H^z_i&=\onebytwo{1}{1}=1\onebytwo{0}{1}\twobytwo{1}{0}{1}{1}
\end{align}
The matrices $\mathbf{S^x}$ and $\mathbf{S^z}$ from equations \eqref{eq:DecreasingSetGenerators} and \eqref{eq:IncreasingSetGenerators} are
\begin{align}
\mathbf{S^x}=\twobyfour{1}{1}{1}{1}{1}{1}{1}{1}-\twobyfour{1}{1}{0}{0}{0}{0}{1}{1}&=\twobyfour{0}{0}{1}{1}{1}{1}{0}{0} \\
\mathbf{S^z}&=\twobyfour{1}{0}{1}{0}{0}{1}{0}{1}
\end{align}
The incomplete permutation matrix
\begin{equation}
D(\mathbf{H^x},X_0)=\onebytwo{1}{0}\otimes\onebytwo{1}{0}\otimes\twobytwo{1}{0}{0}{1}\otimes\twobytwo{1}{0}{0}{1}
\end{equation}
is supported on columns indexed by
\begin{equation}
\GDO{\{S^x_0\}}=\{0000,0001,0010,0011\}
\end{equation}
where shorthand notation is used for elements of $\{0,1\}^4$. The incomplete permutation matrix
\begin{equation}
D(\mathbf{H^x},X_1)=\twobytwo{1}{0}{0}{1}\otimes\twobytwo{1}{0}{0}{1}\otimes\onebytwo{1}{0}\otimes\onebytwo{1}{0}
\end{equation}
is supported on columns indexed by
\begin{equation}
\GDO{\{S^x_1\}}=\{0000,0100,1000,1100\}
\end{equation}
The incomplete permutation matrix $D(\mathbf{H^x},\mathbf{X})$ is supported on columns indexed by
\begin{equation}
\GDO{\mathbf{S^x}}=\GDO{\{S^x_0\}} \cup \GDO{\{S^x_1\}} = \{0000,0001,0010,0011,0100,1000,1100\}
\end{equation}
Similarly, the incomplete permutation matrix
\begin{equation}
D(\mathbf{H^z},Z_0)=\onebytwo{0}{1}\otimes\twobytwo{1}{0}{0}{1}\otimes\onebytwo{0}{1}\otimes\twobytwo{1}{0}{0}{1}
\end{equation}
is supported on columns indexed by
\begin{equation}
\GUP{\{S^z_0\}}=\{1010,1011,1110,1111\}
\end{equation}
The incomplete permutation matrix
\begin{equation}
D(\mathbf{H^z},Z_1)=\twobytwo{1}{0}{0}{1}\otimes\onebytwo{0}{1}\otimes\twobytwo{1}{0}{0}{1}\otimes\onebytwo{0}{1}
\end{equation}
is supported on columns indexed by
\begin{equation}
\GUP{\{S^z_1\}}=\{0101,0111,1101,1111\}
\end{equation}
The incomplete permutation matrix $D(\mathbf{H^z},\mathbf{Z})$ is supported on columns indexed by
\begin{equation}
\GUP{\mathbf{S^z}}=\GUP{\{S^z_0\}} \cup \GUP{\{S^z_1\}} = \{0101,0111,1010,1011,1101,1110,1111\}
\end{equation}

\subsection{Block size and rate}\label{subsec:BlockSizeAndRate}

\begin{theorem}
Let $\mathbf{S^x},\mathbf{S^z}$ be as in Theorem \ref{thm:ColumnSupports}, equations \eqref{eq:DecreasingSetGenerators}, \eqref{eq:IncreasingSetGenerators}, and let $\mathbf{K}=[\vec{n}]\backslash(\GDO{\mathbf{S^x}}\cup\GUP{\mathbf{S^z}})$. Then, $CSS(\mathbf{H^x},\mathbf{H^z},\mathbf{X},\mathbf{Z})$ is a $[[\prod_{i=0}^{m-1} n_i, |\mathbf{K}|]]$ quantum CSS code. 
\end{theorem}

\begin{proof}
The matrix decompositions \eqref{eq:DecompositionOfMHxX} and \eqref{eq:DecompositionOfMHzZ} and the computation of the column supports of $D(\mathbf{H^x},\mathbf{X})$ and $D(\mathbf{H^z},\mathbf{Z})$ in Theorem \ref{thm:ColumnSupports} imply that $$rank(M(\mathbf{H^x},\mathbf{X}))=|\GDO{\mathbf{S^x}}| \text{ and } rank(M(\mathbf{H^z},\mathbf{Z}))=|\GUP{\mathbf{S^z}}|$$ Moreover, $\GDO{\mathbf{S^x}}$ and $\GUP{\mathbf{S^z}}$ are disjoint, again by Theorem \ref{thm:ColumnSupports}. Then, \newline $CSS(\mathbf{H^x},\mathbf{H^z},\mathbf{X},\mathbf{Z})$ has block size $\prod_{i=0}^{m-1} n_i$ qubits and has 
\begin{equation}
\prod_{i=0}^{m-1} n_i - rank(M(\mathbf{H^x},\mathbf{X})) - rank(M(\mathbf{H^z},\mathbf{Z})) = |\mathbf{K}|
\end{equation}
encoded qubits. 
\end{proof}

\textbf{Running example:} for the running example of this section, 
\begin{equation}
\mathbf{K}=\{0,1\}^4 \backslash (\GDO{\mathbf{S^x}}\cup\GUP{\mathbf{S^z}}) = \{0110,1001\}
\end{equation}

\subsection{Basis of the stabilizer}\label{subsec:BasisForStabilizer}

\begin{theorem}\label{thm:BasisForStabilizer}
The non-zero rows of $D(\mathbf{H^x},\mathbf{X})R(\mathbf{H^x})Q(\mathbf{H^x})^{-1}$ are a basis for $RowSpan(M(\mathbf{H^x},\mathbf{X}))$ and the non-zero rows of $D(\mathbf{H^z},\mathbf{Z})R(\mathbf{H^z})Q(\mathbf{H^z})^{-1}$ are a basis for $RowSpan(M(\mathbf{H^z},\mathbf{Z}))$.
\end{theorem}

\begin{proof}
Recall the matrix decomposition \eqref{eq:DecompositionOfMHxX}. The non-zero rows of \newline$D(\mathbf{H^x},\mathbf{X})R(\mathbf{H^x})Q(\mathbf{H^x})^{-1}$ are linearly independent because  $R(\mathbf{H^x}),Q(\mathbf{H^x})$ are invertible and $D(\mathbf{H^x},\mathbf{X})$ is an incomplete permutation matrix. Moreover, the row spans of $D(\mathbf{H^x},\mathbf{X})R(\mathbf{H^x})Q(\mathbf{H^x})^{-1}$ and $M(\mathbf{H^x},\mathbf{X})$ are the same, because $P(\mathbf{H^x},\mathbf{X}),L(\mathbf{H^x},\mathbf{X})$ are invertible. Similarly, \eqref{eq:DecompositionOfMHzZ}, implies the statement for the $Z$ stabilizers. 
\end{proof}

\textbf{Running example:} for the running example of this section, a basis of $RowSpan(M(\mathbf{H^x},\mathbf{X}))$ is obtained from the rows of $\twobytwo{1}{1}{0}{1}^{\otimes 4}$ indexed by $\GDO{\mathbf{S^x}}$, and a basis of $RowSpan(M(\mathbf{H^z},\mathbf{Z}))$ is obtained from the rows of $\twobytwo{1}{0}{1}{1}^{\otimes 4}$ indexed by $\GUP{\mathbf{S^z}}$. These are shown in figures \ref{fig:2dproductindependentstabilizersX} and \ref{fig:2dproductindependentstabilizersZ}. Note that the basis of the stabilizer contains two elements of weight 16, eight elements of weight 8, and four elements of weight 4. On the other hand, the dependent generators of the stabilizer given by $M(\mathbf{H^x},\mathbf{X}),M(\mathbf{H^z},\mathbf{Z})$ contain 16 elements of weight 4. This illustrates one advantage in using the matrices $M(\mathbf{H^x},\mathbf{X}),M(\mathbf{H^z},\mathbf{Z})$ to specify syndrome measurements. 

\subsection{Encoding circuit}\label{subsec:EncodingCircuit}
In describing the encoding circuit, it is convenient to index qubits by tuples in $[\vec{n}]$. 
\begin{theorem}\label{thm:EncodingCircuit}
Let $\mathbf{S^x},\mathbf{S^z}$ be as in Theorem \ref{thm:ColumnSupports}, equations \eqref{eq:DecreasingSetGenerators}, \eqref{eq:IncreasingSetGenerators}, and let $\mathbf{K}=[\vec{n}]\backslash(\GDO{\mathbf{S^x}}\cup\GUP{\mathbf{S^z}})$. Then, $CSS(\mathbf{H^x},\mathbf{H^z},\mathbf{X},\mathbf{Z})$ has the encoding circuit: 
\begin{enumerate}
\item Prepare qubits indexed by $\GDO{\mathbf{S^x}}$ in state $|+\rangle$. 
\item Prepare qubits indexed by $\GUP{\mathbf{S^z}}$ in state $|0\rangle$.
\item Prepare qubits indexed by $\mathbf{K}$ in the state to be encoded. 
\item Apply CNOT circuit $U \in \CliffordGroup{[\vec{n}]}$, where the image of $U$ under the surjective group homomorphism $\CliffordGroupHomomorphism$ satisfies 
\begin{equation}\label{eq:CNOTcircuitViaSymplecticImage}
\CliffordGroupHomomorphism(U^{-1})=\twobytwo{R(\mathbf{H^x})Q(\mathbf{H^x})^{-1}}{0}{0}{R(\mathbf{H^z})Q(\mathbf{H^z})^{-1}}
\end{equation}
\end{enumerate}
\end{theorem}

\begin{proof}
Before the application of $U$, the state to be encoded is placed in the joint +1 eigenspace of the single-qubit Pauli operators ${\PauliX}_i, i \in \GDO{\mathbf{S^x}}$ and ${\PauliZ}_j, j \in \GUP{\mathbf{S^z}}$. After the application of $U$, the encoded state is in the joint $+1$ eigenspace of the Pauli operators $U{\PauliX}_iU^{-1}, i \in \GDO{\mathbf{S^x}}$ and $U {\PauliZ}_j U^{-1}, j \in \GUP{\mathbf{S^z}}$. Using \eqref{eq:PropertyOfCliffordGroupHomomorphism} and \eqref{eq:CNOTcircuitViaSymplecticImage}, deduce $\PauliGroupHomomorphism(U{\PauliX}_iU^{-1})$ is row $i$ of $\onebytwo{R(\mathbf{H^x})Q(\mathbf{H^x})^{-1}}{0}$ and $\PauliGroupHomomorphism(U{\PauliZ}_jU^{-1})$ is row $j$ of $\onebytwo{0}{R(\mathbf{H^z})Q(\mathbf{H^z})^{-1}}$. Then, Theorem \ref{thm:BasisForStabilizer} implies that the final stabilizer subspace can equivalently be described using the parity check matrix $CSS(\mathbf{H^x},\mathbf{H^z},\mathbf{X},\mathbf{Z})$. 
\end{proof}

Moreover, the encoding circuit can be parallelized in the following sense:

\begin{theorem}
Suppose that there exist $d_0,\dots,d_{m-1}$ such that for each $i\in[m]$, the symplectic matrix $$\twobytwo{R_iQ_i^{-1}}{0}{0}{R_i^{-T}Q_i^{-1}}$$ has a preimage in the Clifford group given by a depth $d_i$ CNOT circuit. Then, the symplectic matrix $$\twobytwo{R(\mathbf{H^x})Q(\mathbf{H^x})^{-1}}{0}{0}{R(\mathbf{H^z})Q(\mathbf{H^z})^{-1}}$$ has a preimage in the Clifford group given by a depth $\sum_{i=0}^{m-1} d_i$ CNOT circuit. 
\end{theorem}

\begin{proof}
Write
\begin{multline}\label{eq:EncodingCircuitFactorization}\twobytwo{R(\mathbf{H^x})Q(\mathbf{H^x})^{-1}}{0}{0}{R(\mathbf{H^z})Q(\mathbf{H^z})^{-1}} \\ = \prod_{i=0}^{m-1}\twobytwo{I \otimes \dots \otimes R_iQ_i^{-1} \otimes \dots \otimes I}{0}{0}{I \otimes \dots \otimes R_i^{-T}Q_i^{-1} \otimes \dots \otimes I}\end{multline}
The $i$-th term has preimage in the Clifford group that is a depth $d_i$ CNOT circuit: the preimage of $$\twobytwo{R_iQ_i^{-1}}{0}{0}{R_i^{-T}Q_i^{-1}}$$ performed in parallel on all groups of qubits $PROJ^{-1}_{\{i\}^c}(y), \;\; y \in \left(\times_{j \neq i} [n_j]\right)$, where $PROJ_{\{i\}^c} : [\vec{n}] \rightarrow \left(\times_{i \in X_1^c}[n_i]\right)$ is the coordinate projection on positions other than $i$. 
\end{proof}

\textbf{Running example:} For the running example of this section, 
$$\twobytwo{R(\mathbf{H^x})Q(\mathbf{H^x})^{-1}}{}{}{R(\mathbf{H^z})Q(\mathbf{H^z})^{-1}} = \twobytwo{\twobytwo{1}{1}{0}{1}^{\otimes 4}}{}{}{\twobytwo{1}{0}{1}{1}^{\otimes 4}}$$

The symplectic matrix
$$ \begin{pmatrix} 1&1&0&0\\0&1&0&0\\0&0&1&0\\0&0&1&1 \end{pmatrix} $$
has preimage in the Clifford group a CNOT gate with control qubit 0 and target qubit 1. The symplectic matrix
$$\twobytwo{I_8 \otimes \twobytwo{1}{1}{0}{1}}{}{}{I_8 \otimes \twobytwo{1}{0}{1}{1}}$$ has a depth 1 preimage in the Clifford group: CNOT gates with control qubit $(y,0)$ and target qubit $(y,1)$ for $y \in \{0,1\}^3$. Similarly, the other three terms in the encoding circuit factorization \eqref{eq:EncodingCircuitFactorization} have preimages of depth 1. The full depth 4 encoding circuit can be seen in Figure \ref{fig:2dproductencodingcircuit}. The CNOT part of the circuit follows the same recursive pattern as is used to encode Polar and Reed-Muller codes. 

\begin{figure}[H]
\centering
\caption{Independent generators of the stabilizer and encoding circuit for the running example.}
\label{fig:2dproductindependentstabilizers}
\begin{subfigure}{\textwidth}
\centering
\caption{Non-zero elements of the rows of $\twobytwo{1}{1}{0}{1}^{\otimes 4}$ indexed by $\GDO{\mathbf{S^x}}$}
\label{fig:2dproductindependentstabilizersX}
\begin{tikzpicture}[scale=0.5]
\foreach \x in {0,...,16} {\draw (\x,0) -- (\x,-7);}
\foreach \y in {0,...,7} \draw (0,-\y) -- (16, -\y) ;
\foreach \p/\q/\r/\s/\y in {0/0/0/0/0,0/0/0/1/1,0/0/1/0/2,0/0/1/1/3,0/1/0/0/4,1/0/0/0/5,1/1/0/0/6}
{
\foreach \a in {0,1}
	\foreach \b in {0,1}
		\foreach \c in {0,1}
			\foreach \d in {0,1}
				{
					\tikzmath{\x=8*\a+4*\b+2*\c+\d;}
					\tikzmath{\z=(1-\p+\p*\a)*(1-\q+\q*\b)*(1-\r+\r*\c)*(1-\s+\s*\d);}
					\fill (\x,-\y)--(\x+\z,-\y)--(\x+\z,-\y-\z)--(\x,-\y-\z)--(\x,-\y);					
				}
}
\end{tikzpicture}
\end{subfigure}
\begin{subfigure}{\textwidth}
\centering
\caption{Non-zero elements of the rows of $\twobytwo{1}{0}{1}{1}^{\otimes 4}$ indexed by $\GUP{\mathbf{S^z}}$}
\label{fig:2dproductindependentstabilizersZ}
\begin{tikzpicture}[scale=0.5]
\foreach \x in {0,...,16} {\draw (\x,0) -- (\x,-7);}
\foreach \y in {0,...,7} \draw (0,-\y) -- (16, -\y) ;
\foreach \p/\q/\r/\s/\y in {0/1/0/1/0,0/1/1/1/1,1/0/1/0/2,1/0/1/1/3,1/1/0/1/4,1/1/1/0/5,1/1/1/1/6}
{
\foreach \a in {0,1}
	\foreach \b in {0,1}
		\foreach \c in {0,1}
			\foreach \d in {0,1}
				{
					\tikzmath{\x=8*\a+4*\b+2*\c+\d;}
					\tikzmath{\z=(\p+(1-\p)*(1-\a))*(\q+(1-\q)*(1-\b))*(\r+(1-\r)*(1-\c))*(\s+(1-\s)*(1-\d));}
					\fill (\x,-\y)--(\x+\z,-\y)--(\x+\z,-\y-\z)--(\x,-\y-\z)--(\x,-\y);					
				}
}
\end{tikzpicture}
\end{subfigure}
\begin{subfigure}{\textwidth}
\centering
\caption{Encoding circuit}
\label{fig:2dproductencodingcircuit}
\begin{tikzpicture}[scale=0.5]
{
\foreach \a in {0,1}
\foreach \b in {0,1}
\foreach \c in {0,1}
\foreach \d in {0,1}
{
\tikzmath{\y=8*\a+4*\b+2*\c+\d;}
\draw (-4,\y) node[fill=white] {$Q_{\a\b\c\d}$};
\draw (0,\y)--(13.5,\y);
}
}
{
\foreach \a/\b/\c/\d in {0/0/0/0,0/0/0/1,0/0/1/0,0/0/1/1,0/1/0/0,1/0/0/0,1/1/0/0}
{
\tikzmath{\y=8*\a+4*\b+2*\c+\d;}
\draw (-1,\y) node {$|+\rangle$};
}
}
{
\foreach\a/\b/\c/\d in {0/1/0/1,0/1/1/1,1/0/1/0,1/0/1/1,1/1/0/1,1/1/1/0,1/1/1/1}
{
\tikzmath{\y=8*\a+4*\b+2*\c+\d;}
\draw (-1,\y) node {$|0\rangle$};
}
}
{
\foreach \p in {0,1}
\foreach \q in {0,1}
\foreach \r in {0,1}
{
\tikzmath{\xd=1;}
\tikzmath{\ydc=8*\p+4*\q+2*\r;}
\tikzmath{\ydt=8*\p+4*\q+2*\r+1;}
\draw (\xd,\ydc)--(\xd,\ydt+0.2);
\filldraw (\xd,\ydc) circle [radius=0.2];
\draw (\xd,\ydt) circle [radius=0.2];
\tikzmath{\xc=3+0.5*\r;}
\tikzmath{\ycc=8*\p+4*\q+\r;}
\tikzmath{\yct=8*\p+4*\q+2+\r;}
\draw (\xc,\ycc)--(\xc,\yct+0.2);
\filldraw (\xc,\ycc) circle [radius=0.2];
\draw (\xc,\yct) circle [radius=0.2];
\tikzmath{\xb=5.5+\q+0.5*\r;}
\tikzmath{\ybc=8*\p+2*\q+\r;}
\tikzmath{\ybt=8*\p+4+2*\q+\r;}
\draw (\xb,\ybc)--(\xb,\ybt+0.2);
\filldraw (\xb,\ybc) circle [radius=0.2];
\draw (\xb,\ybt) circle [radius=0.2];
\tikzmath{\xa=9+2*\p+\q+0.5*\r;}
\tikzmath{\yac=4*\p+2*\q+\r;}
\tikzmath{\yat=8+4*\p+2*\q+\r;}
\draw (\xa,\yac)--(\xa,\yat+0.2);
\filldraw (\xa,\yac) circle [radius=0.2];
\draw (\xa,\yat) circle [radius=0.2];
}
}
\end{tikzpicture}
\end{subfigure}
\end{figure}

\subsection{Normalizer, logical operators}\label{subsec:NormalizerLogicalOperators}

\begin{theorem}
The normalizer for $CSS(\mathbf{H^x},\mathbf{H^z},\mathbf{X},\mathbf{Z})$ is generated by $iI$ and the Pauli group elements corresponding to the non-zero rows of
\begin{equation}\label{eq:NormalizerGenerators}
\twobytwo{\GP{\GUP{\mathbf{S^z}}^c}R(\mathbf{H^x})Q(\mathbf{H^x})^{-1}}{}{}{\GP{\GDO{\mathbf{S^x}}^c}R(\mathbf{H^z})Q(\mathbf{H^z})^{-1}}
\end{equation}

A canonical choice of logical operators for $CSS(H^x,H^z,X,Z)$ is obtained from the non-zero rows of
\begin{equation}\label{eq:LogicalOperators}
\twobytwo{\GP{\mathbf{K}}R(\mathbf{H^x})Q(\mathbf{H^x})^{-1}}{}{}{\GP{\mathbf{K}}R(\mathbf{H^z})Q(\mathbf{H^z})^{-1}}
\end{equation}
\end{theorem}

\begin{proof}
Think of the encoding circuit in Theorem \ref{thm:EncodingCircuit}. Before the application of $U$, the subspace stabilized by the single qubit Pauli operators ${\PauliX}_i,i \in \GDO{\mathbf{S^x}}$ and ${\PauliZ}_j,j \in \GUP{\mathbf{S^z}}$ has a canonical choice for generators of the normalizer: $iI$ and the single qubit Pauli operators ${\PauliX}_i,i\in\GUP{\mathbf{S^z}}^c$ and ${\PauliZ}_j,j\in\GDO{\mathbf{S^x}}^c$. It also has a canonical choice of logical operators: the single qubit Pauli operators ${\PauliX}_i,{\PauliZ}_i, i \in \mathbf{K}$. After the application of $U$, the single qubit Pauli operators in the original set of normalizer generators and logical operators are conjugated by $U$ to become the Pauli operators corresponding to the rows of $\CliffordGroupHomomorphism(U^{-1})$ given in \eqref{eq:NormalizerGenerators} and \eqref{eq:LogicalOperators}. 
\end{proof}

\textbf{Running example:} For the running example, the $X$ logical operators are determined by the rows of $\twobytwo{1}{1}{0}{1}^{\otimes 4}$ indexed by $\mathbf{K}=\{0110,1001\}$. The first one of these corresponds to Pauli-$\PauliX$ on qubits $Q_{0000},Q_{0001},Q_{1001},Q_{1000}$, and the second to Pauli-$\PauliX$ on qubits $Q_{0000},Q_{0010},Q_{0110},Q_{0100}$. Similarly, the $Z$ logical operators are given by the rows of $\twobytwo{1}{0}{1}{1}^{\otimes 4}$ indexed by $\mathbf{K}=\{0110,1001\}$. The first of these corresponds to Pauli-$\PauliZ$ on qubits $Q_{0000},Q_{0010},Q_{0110},Q_{0100}$ and the second to Pauli-$\PauliZ$ on qubits $Q_{0000},Q_{0001},Q_{1001},Q_{1000}$. The logical operators can be visualized using Figure \ref{fig:2dproducttannergraph}. It can be seen there that they correspond to the non-contractible loops of the torus. 

\subsection{Row and column weight}\label{subsec:RowAndColumnWeight}

While the present paper is focused on quantum CSS codes with block size a few tens or a few hundreds of qubits and with sparse parity check matrices, it is worth mentioning the asymptotic behaviour of the row and column weights of the parity check matrices $CSS(\mathbf{H^x},\mathbf{H^z},\mathbf{X},\mathbf{Z})$. 

\begin{theorem}\label{thm:RowAndColumnWeight}
Let $w_r,w_c$ be upper bounds on the weight of rows, respectively columns, of matrices in the tuples $\mathbf{H^x},\mathbf{H^z}$. Let $p$ be an upper bound on the size of the sets in the tuples $\mathbf{X},\mathbf{Z}$. Then, $w_r^p$ is an upper bound on the weight of rows of $CSS(\mathbf{H^x},\mathbf{H^z},\mathbf{X},\mathbf{Z})$ and on the degree of checks in the Tanner graph, $\max(u,v)w_c^p$ is an upper bound on the weight of columns of $CSS(\mathbf{H^x},\mathbf{H^z},\mathbf{X},\mathbf{Z})$ and $(u+v)w_c^p$ is an upper bound on the degree of qubits in the Tanner graph.  
\end{theorem}

\begin{proof}
Each layer $$M(\mathbf{H^x},X_i)=\otimes_{j=0}^{m-1} \begin{cases}H_j & \text{if } j \in X_i \\ I_{n_j}& \text{otherwise}\end{cases}$$ has rows of weight at most $w_r^{|X_i|}$ and columns of weight at most $w_c^{|X_i|}$. The same applies to layers of $M(\mathbf{H^z},\mathbf{Z})$. Maximizing over the layers gives the upper bound on row weight/check degree, and summing over the layers gives the upper bound on column weight/qubit degree. 
\end{proof}

Therefore, keeping $u,v,p,w_r,w_c$ constant and letting (some of) the $n_j$ go to infinity gives quantum ldpc codes. The properties of asymptotic families obtained in this way are left for future research, while the rest of this paper focuses on the finite block length regime. 

\textbf{Running example:} for the running example of this section, $w_r=p=u=v=2$, $w_c=1$. The parity check matrix $CSS(\mathbf{H^x},\mathbf{H^z},\mathbf{X},\mathbf{Z})$ has row weight 4 and column weight 2. In the Tanner graph, all vertices have degree 4. 

\subsection{Syndrome measurement circuit}\label{subsec:SyndromeMeasurementCircuit}

\begin{theorem}
Let $p=\max\left( \left\{ \prod_{j \in X_i^c} n_j : i \in [u] \right\} \cup \left\{ \prod_{j \in Z_i^c} n_j : i \in [v] \right\} \right)$. Let $d,a$ be such that the following two statements hold:
\begin{enumerate}
\item For each $i \in [u]$, the $\PauliX$ measurements specified by $\otimes_{j \in X_i} H^x_j$ can be measured by a circuit with at most $a$ ancilla qubits and depth at most $d$. 
\item For each $i \in [v]$, the $\PauliZ$ measurements specified by $\otimes_{j \in Z_i} H^z_j$ can be measured by a circuit with at most $a$ ancilla qubits and depth at most $d$. 
\end{enumerate}
Then, $CSS(\mathbf{H^x},\mathbf{H^z},\mathbf{X},\mathbf{Z})$ has a syndrome measurement circuit with at most $pa$ ancilla qubits and depth at most $(u+v)d$. 
\end{theorem}

\begin{proof}
Consider the layer $M(\mathbf{H^x},X_0)$. There exist row and column permutations that map $M(\mathbf{H^x},X_0)$ to
\begin{equation}
\left(\otimes_{j\in X_0^c} I_{n_j}\right) \otimes \left( \otimes_{j \in X_0} H^x_j \right) = \threebythree{\otimes_{j \in X_0} H^x_j}{}{}{}{\ddots}{}{}{}{\otimes_{j \in X_0} H^x_j}
\end{equation}
Therefore, the syndrome measurements specified by $M(\mathbf{H^x},X_0)$ can be performed with at most $pa$ ancilla qubits and depth at most $d$: perform the circuit for $\otimes_{i \in X_0} H^x_i$ in parallel on all groups of qubits $PROJ_{X_0^c}^{-1}(y), \;\; y \in \left(\times_{i \in X_0^c}[n_i]\right) $ where $PROJ_{X_0^c} : [\vec{n}] \rightarrow \left(\times_{i \in X_0^c}[n_i]\right)$ is the coordinate projection on the positions in $X_0^c$.  A similar argument applies for the other layers. Then, all syndrome measurements of $CSS(\mathbf{H^x},\mathbf{H^z},\mathbf{X},\mathbf{Z})$ can be performed by a circuit with at most $pa$ ancilla qubits and depth at most $(u+v)d$. 
\end{proof}

\textbf{Running example:} For the running example of this section, $p=4$. The row vector $\onebytwo{1}{1}\otimes\onebytwo{1}{1}=\onebyfour{1}{1}{1}{1}$ specifies a syndrome measurement that can be performed with one ancilla qubit and depth 6: preparation of the ancilla in state $|+\rangle$ or state $|0\rangle$, 4 CNOT gates and measurement of the ancilla in the $\PauliX$ or $\PauliZ$ eigenbasis. Therefore, the running example $CSS(\mathbf{H^x},\mathbf{H^z},\mathbf{X},\mathbf{Z})$ has a syndrome measurement circuit with 4 ancilla qubits and depth 24. 

\subsection{Protection against syndrome errors}\label{subsec:ProtectionAgainstSyndromeErrors}

The vector spaces $Im(M(\mathbf{H^x},\mathbf{X}))$ and $Im(M(\mathbf{H^z},\mathbf{Z}))$ can be called the spaces of valid syndromes: if no error occurs during the measurements, then the syndromes for phase flips and bit flips belong respectively to $Im(M(\mathbf{H^x},\mathbf{X}))$ and $Im(M(\mathbf{H^z},\mathbf{Z}))$. When there are errors in the measurement outcomes, then $Im(M(\mathbf{H^x},\mathbf{X}))$, $Im(M(\mathbf{H^z},\mathbf{Z}))$ can be viewed as classical linear codes. The matrix decompositions \eqref{eq:DecompositionOfMHxX} and \eqref{eq:DecompositionOfMHzZ} give the following information about these vector spaces:

\begin{theorem}
\begin{enumerate}
\item $Im(M(\mathbf{H^x},\mathbf{X}))$ is a \newline $\left[\sum_{j=0}^{u-1} \left(\prod_{i\in X_j} l_i^x\right)\left(\prod_{i \in X_j^c} n_i \right), \left|\GDO{\mathbf{S^x}}\right|\right]$ classical linear code. A basis for it is given by the non-zero columns of $P(\mathbf{H^x},\mathbf{X})^{-1} L(\mathbf{H^x},\mathbf{X})D(\mathbf{H^x},\mathbf{X})$. A basis for its orthogonal complement is given by rows of \newline $L(\mathbf{H^x},\mathbf{X})^{-1}P(\mathbf{H^x},\mathbf{X})$ indexed by the complement of the row support of $D(\mathbf{H^x},\mathbf{X})$. 
\item $Im(M(\mathbf{H^z},\mathbf{Z}))$ is a  $\left[\sum_{j=0}^{v-1} \left(\prod_{i\in Z_j} l_i^z\right)\left(\prod_{i \in Z_j^c} n_i \right), \left|\GUP{\mathbf{S^z}}\right|\right]$ classical linear code. A basis for it is given by the non-zero columns of \newline $P(\mathbf{H^z},\mathbf{Z})^{-1} L(\mathbf{H^z},\mathbf{Z})D(\mathbf{H^z},\mathbf{Z})$. A basis for its orthogonal complement is given by rows of $L(\mathbf{H^z},\mathbf{Z})^{-1}P(\mathbf{H^z},\mathbf{Z})$ indexed by the complement of the row support of $D(\mathbf{H^z},\mathbf{Z})$. 
\end{enumerate}
\end{theorem}

\begin{proof}
A single layer $M(\mathbf{H^x},X_j)$ has $\left(\prod_{i\in X_j} l^x_i\right)\left(\prod_{i \in X_j^c} n_i\right)$ rows. Therefore, the total number of rows of $M(\mathbf{H^x},\mathbf{X})$ is $\sum_{j=0}^{u-1} \left(\prod_{i\in X_j} l_i^x\right)\left(\prod_{i \in X_j^c} n_i \right)$. The dimension of $Im(M(\mathbf{H^x},\mathbf{X}))$ is $rank(M(\mathbf{H^x},\mathbf{X}))=\left| \GDO{\mathbf{S^x}} \right|$. Now, use \eqref{eq:DecompositionOfMHxX} to find a basis for $Im(M(\mathbf{H^x},\mathbf{X}))$ and its orthogonal complement. Since $Q(\mathbf{H^x})$ and $R(\mathbf{H^x})$ are invertible, $Im(M(\mathbf{H^x},\mathbf{X}))=Im(P(\mathbf{H^x},\mathbf{X})^{-1}L(\mathbf{H^x},\mathbf{X})D(\mathbf{H^x},\mathbf{X}))$. Since $P(\mathbf{H^x},\mathbf{X})$ and $L(\mathbf{H^x},\mathbf{X})$ are invertible and $D(\mathbf{H^x},\mathbf{X})$ is an incomplete permutation matrix, a basis for  this vector space is given by the non-zero columns of $P(\mathbf{H^x},\mathbf{X})^{-1}L(\mathbf{H^x},\mathbf{X})D(\mathbf{H^x},\mathbf{X})$, which are also the columns of \newline $P(\mathbf{H^x},\mathbf{X})^{-1}L(\mathbf{H^x},\mathbf{X})$ indexed by the row support of $D(\mathbf{H^x},\mathbf{X})$. Therefore, a basis for the orthogonal complement is given by rows of $L(\mathbf{H^x},\mathbf{X})^{-1}P(\mathbf{H^x},\mathbf{X})$ indexed by the complement of the row support of $D(\mathbf{H^x},\mathbf{X})$. This proves part one. Similarly, \eqref{eq:DecompositionOfMHzZ} implies part two. 
\end{proof}

\textbf{Running example:} The matrix decomposition of $M(\mathbf{H^x},\mathbf{X})$ for the running example of this section is 
\begin{multline}
M(\mathbf{H^x},\mathbf{X})=L(\mathbf{H^x},\mathbf{X})D(\mathbf{H^x},\mathbf{X})R(\mathbf{H^x}) \\
=\left(\twobytwo{\twobytwo{1}{1}{0}{1}^{\otimes 2}}{}{}{\twobytwo{1}{1}{0}{1}^{\otimes 2}} \Big(I_8 + e_{8,4}e_{8,0}^T \Big)\right) \\ * \left(\Big(I_8 + e_{8,4}e_{8,0}^T \Big) \twobyone{D(\mathbf{H^x},X_0)}{D(\mathbf{H^x},X_1)} \right)\twobytwo{1}{1}{0}{1}^{\otimes 4}
\end{multline}
where $e_{n,0}, \dots, e_{n,n-1}$ denotes the standard basis of $\F^n$. The incomplete permutation matrix $D(\mathbf{H^x},\mathbf{X})=\Big(I_8 + e_{8,4}e_{8,0}^T \Big) \twobyone{D(\mathbf{H^x},X_0)}{D(\mathbf{H^x},X_1)}$  is supported on rows $0,1,2,3,5,6,7$. Therefore, a basis for $Im(M(\mathbf{H^x},\mathbf{X}))$ is given by columns $0,1,2,3,5,6,7$ of 
\begin{multline}
L(\mathbf{H^x},\mathbf{X})=\twobytwo{\twobytwo{1}{1}{0}{1}^{\otimes 2}}{}{}{\twobytwo{1}{1}{0}{1}^{\otimes 2}} \Big(I_8 + e_{8,4}e_{8,0}^T \Big) \\ =
\begin{pmatrix}
1&1&1&1&0&0&0&0\\
0&1&0&1&0&0&0&0\\
0&0&1&1&0&0&0&0\\
0&0&0&1&0&0&0&0\\
1&0&0&0&1&1&1&1\\
0&0&0&0&0&1&0&1\\
0&0&0&0&0&0&1&1\\
0&0&0&0&0&0&0&1
\end{pmatrix}
\end{multline}
and a basis for the orthogonal complement of $Im(M(\mathbf{H^x},\mathbf{X}))$ is given by row 4 of
\begin{multline}
L(\mathbf{H^x},\mathbf{X})^{-1}=\Big(I_8 + e_{8,4}e_{8,0}^T \Big) \twobytwo{\twobytwo{1}{1}{0}{1}^{\otimes 2}}{}{}{\twobytwo{1}{1}{0}{1}^{\otimes 2}}\\ =
\begin{pmatrix}
1&1&1&1&0&0&0&0\\
0&1&0&1&0&0&0&0\\
0&0&1&1&0&0&0&0\\
0&0&0&1&0&0&0&0\\
1&1&1&1&1&1&1&1\\
0&0&0&0&0&1&0&1\\
0&0&0&0&0&0&1&1\\
0&0&0&0&0&0&0&1
\end{pmatrix}
\end{multline}
Therefore, $Im(M(\mathbf{H^x},\mathbf{X}))$ is the $[8,7,2]$ single parity check code. Similarly, $Im(M(\mathbf{H^z},\mathbf{Z}))$ is also the $[8,7,2]$ single parity check code.  

\section{A subfamily related to a generalization of Reed-Muller codes}\label{sec:SpecialCase}

The simplest class of examples from the construction is obtained when all matrices in the tuples $\mathbf{H^x},\mathbf{H^z}$ are $\onebytwo{1}{1}$. The matrices $M(\mathbf{H},\mathbf{X})$ from equation \eqref{eq:LayeredTensorProductMatrixNotation} in section \ref{sec:MatricesWithLayersOfTensorProducts} in this case depend only on $\mathbf{X}$, so the shorthand notation $M(\mathbf{X}),CSS(\mathbf{X},\mathbf{Z})$ will be used in this section. The equations \eqref{eq:DecreasingSetGenerators}, \eqref{eq:IncreasingSetGenerators} determining the generators $\mathbf{S^x},\mathbf{S^z}$ of the decreasing, respectively increasing, subset of column indices simplify to 
\begin{align}
\mathbf{S^x} &= \ones{u} \ones{m}^T - \mathbf{X} \\
\mathbf{S^z} &= \mathbf{Z}
\end{align}

Properties of examples in this class follow from properties of a family of vector spaces that generalize classical Reed-Muller codes. 

\subsection{Generalization of Reed-Muller codes}\label{sec:GRMcodes}

Let
\begin{equation}
R_m=\twobytwo{1}{1}{0}{1}^{\otimes m}
\end{equation}

To decreasing $\mathbf{S} \subset\{0,1\}^m$ associate the vector space 
\begin{equation}\label{eq:VectorSpacesGeneralizingReedMullerCodes}
\GRM{\mathbf{S}}=RowSpan \left( \GP{\mathbf{S}} R_m\right)
\end{equation}
and to increasing $\mathbf{T} \subset\{0,1\}^m$ associate the vector space
\begin{equation}
\GRMT{\mathbf{T}}=RowSpan \left( \GP{\mathbf{T}} R_m^{-T} \right)
\end{equation}
It will be seen later that these are two different parametrizations of the same family of vector spaces. 

The vector spaces $\GRM{\mathbf{S}}$ for $\mathbf{S}$ decreasing generalize Reed-Muller codes. If
$
\mathbf{S}(r,m)=\{s\in\{0,1\}^m:|s|\leq r\}
$
then the corresponding subspace is 
$
\GRM{\mathbf{S}(r,m)}=\mathcal{RM}(r,m),
$
the usual Reed-Muller code with parameters $r,m$. This is because the row of $R_m$ with index $s\subset [m]$ is the truth table of the monomial $\mu_s(v_0,\dots,v_{m-1})=\prod_{i \in s} v_i$, so $RowSpan\left(\GP{\mathbf{S}(r,m)}R_m\right)$ is the span of the truth tables of all monomials of degree at most $r$.

Moreover, it turns out that some of the properties of Reed-Muller codes can be extended to the vector spaces $\GRM{\mathbf{S}}$. 

First, the following theorem shows that $\GRM{\mathbf{S}},\GRMT{\mathbf{T}}$ are two \newline parametrizations of the same family of vector spaces. It also generalizes \cite[Chapter 13, Theorems 4 and 12]{macwilliams1977theory}: it shows that $\GRM{\mathbf{S}}$ is spanned by certain low-weight elements and that the orthogonal complement of such a vector space is another vector space of the same family. 

\begin{theorem}\label{thm:LinearAlgebraPropertiesOfGRM}
Take a tuple of subsets of $[m]$
\begin{equation}
\mathbf{S}=\threebyone{S_0}{\vdots}{S_{u-1}}
\end{equation} 
and take
$
\mathbf{T}=\ones{u}\ones{m}^T - \mathbf{S}
$.
Then,
\begin{equation}\label{eq:TwoParametrizationsOfGRM}
\GRM{\GDO{\mathbf{S}}}=\GRMT{\GUP{\mathbf{T}}}=RowSpan(M(\mathbf{T}))
\end{equation}
and
\begin{equation}\label{eq:OrthogonalComplementOfGRM}
\GRM{\GDO{\mathbf{S}}}^\bot = \GRM{\GUP{\mathbf{T}}^c}
\end{equation}
\end{theorem}

\begin{proof}
For the pair of orthogonal matrices $\onebytwo{1}{1},\onebytwo{1}{1}$, Theorem \ref{thm:JointDecompositionForPairOfOrthogonalMatrices} gives the decompositions
\begin{align}
\onebytwo{1}{1}&=\onebytwo{1}{0}\twobytwo{1}{1}{0}{1} \label{eq:OneOneDecompositionOne}\\
\onebytwo{1}{1}&=\onebytwo{0}{1}\twobytwo{1}{0}{1}{1}\label{eq:OneOneDecompositionTwo}
\end{align}

From the two decompositions \eqref{eq:OneOneDecompositionOne} and \eqref{eq:OneOneDecompositionTwo}, and from equation \eqref{eq:LDUforLayersOfTensorProducts} in section \ref{sec:MatricesWithLayersOfTensorProducts} follow two alternative decompositions of $M(\mathbf{T})$:
\begin{equation}\label{eq:MXdecompositions}
M(\mathbf{T})=L(\mathbf{T}) \GP{\GDO{\mathbf{S}}} R_m= L'(\mathbf{T}) \GP{\GUP{\mathbf{T}}} R_m^{-T}
\end{equation}
Equation \eqref{eq:MXdecompositions} implies the Theorem. Indeed, since $L(\mathbf{T}),L'(\mathbf{T})$ are invertible, \eqref{eq:MXdecompositions} imples \eqref{eq:TwoParametrizationsOfGRM}. Moreover, the orthogonal complement of \newline $RowSpan\left(\GP{\GUP{\mathbf{T}}}R_m^{-T}\right)$ is $RowSpan\left(\GP{\GUP{\mathbf{T}}^c}R_m\right) $, and this gives \eqref{eq:OrthogonalComplementOfGRM}.
\end{proof}

Moreover, the inductive argument using the $(u,u+v)$ construction for the distance of Reed-Muller codes \cite[Chapter 13, Theorems 2 and 3]{macwilliams1977theory} also generalizes to the vector spaces $\GRM{\mathbf{S}}$, and to the distances of quantum CSS codes built from these vector spaces. For $\mathbf{S},\mathbf{T}$ decreasing, $\mathbf{S} \subset \mathbf{T}$, let 
\begin{align}
r(\mathbf{T},\mathbf{S})&=\max \{|t|:t \in \mathbf{T}, t\notin \mathbf{S}\}\\
d(\mathbf{T},\mathbf{S})&=\min \{|v|:v\in\GRM{\mathbf{T}}, v \notin \GRM{\mathbf{S}}\}
\end{align}
Then: 
\begin{theorem}\label{thm:DistanceForNestedGRM}
Let $\mathbf{S},\mathbf{T} \subset \{0,1\}^m$ be decreasing with $\mathbf{S} \subset \mathbf{T}$. Then, 
$$
d(\mathbf{T},\mathbf{S})=2^{m-r(\mathbf{T},\mathbf{S})}
$$
\end{theorem}

\begin{proof}
Take
\begin{align}
\mathbf{S}_0&=\{s\in\{0,1\}^{m-1}:(0,s)\in \mathbf{S}\}\\
\mathbf{S}_1&=\{s\in\{0,1\}^{m-1}:(1,s)\in \mathbf{S}\}\\
\mathbf{T}_0&=\{t\in\{0,1\}^{m-1}:(0,t)\in \mathbf{T}\}\\
\mathbf{T}_1&=\{t\in\{0,1\}^{m-1}:(1,t)\in \mathbf{T}\}
\end{align}
so that
\begin{align}
\mathbf{S}&=(0,\mathbf{S}_0)\cup(1,\mathbf{S}_1) \\
\mathbf{T}&=(0,\mathbf{T}_0)\cup(1,\mathbf{T}_1)
\end{align}

Note that
$
r(\mathbf{T},\mathbf{S})=\max(r(\mathbf{T}_0,\mathbf{S}_0),r(\mathbf{T}_1,\mathbf{S}_1)+1)
$. 

Note further that
\begin{align}
\GP{\mathbf{S}}R_m&=\twobytwo{\GP{\mathbf{S}_0}}{}{}{\GP{\mathbf{S}_1}}\twobytwo{R_{m-1}}{R_{m-1}}{}{R_{m-1}}\\
\GP{\mathbf{T}}R_m&=\twobytwo{\GP{\mathbf{T}_0}}{}{}{\GP{\mathbf{T}_1}}\twobytwo{R_{m-1}}{R_{m-1}}{}{R_{m-1}}
\end{align}
so $\GRM{\mathbf{S}}$ is obtained from $\GRM{\mathbf{S}_0}$ and $\GRM{\mathbf{S}_1}$ using the $(u,u+v)$ construction and, similarly, $\GRM{\mathbf{T}}$ is obtained from $\GRM{\mathbf{T}_0}$ and $\GRM{\mathbf{T}_1}$ using the $(u,u+v)$ construction. In the special case $\mathbf{S}=\emptyset, \GRM{\mathbf{S}}=\{0\}$, the formula for the distance of the $(u,u+v)$ construction \cite[Chapter 2, Theorem 33]{macwilliams1977theory} is sufficient to finish the argument. In the general case, Theorem \ref{thm:NestedGRMUUV} in the present paper gives 
$
d(\mathbf{T},\mathbf{S})=\min(2d(\mathbf{T}_0,\mathbf{S}_0),d(\mathbf{T}_1,\mathbf{S}_1))
$. 
The proof is completed by induction on $m$. 
\end{proof}

The remaining step of the proof of Theorem \ref{thm:DistanceForNestedGRM} is

\begin{theorem}\label{thm:NestedGRMUUV}
For $\mathbf{S},\mathbf{T},\mathbf{S}_0,\mathbf{S}_1,\mathbf{T}_0,\mathbf{T}_1$ as in Theorem \ref{thm:DistanceForNestedGRM}, 
$$
d(\mathbf{T},\mathbf{S})=\min(2d(\mathbf{T}_0,\mathbf{S}_0),d(\mathbf{T}_1,\mathbf{S}_1))
$$
\end{theorem}

\begin{proof}
First, $\GRM{\mathbf{T}}\backslash\GRM{\mathbf{S}}$ contains vectors of the form $(u,u)$ for $u \in \GRM{\mathbf{T}_0}\backslash\GRM{\mathbf{S}_0}$ and $(0,v)$ for $v\in \GRM{\mathbf{T}_1}\backslash\GRM{\mathbf{S}_1}$. Therefore, 
$
d(\mathbf{T},\mathbf{S})\leq\min(2d(\mathbf{T}_0,\mathbf{S}_0),d(\mathbf{T}_1,\mathbf{S}_1))
$.

It remains to prove the other direction. Note that, since $\mathbf{S},\mathbf{T}$ are decreasing, $\mathbf{S}_0,\mathbf{S}_1,\mathbf{T}_0,\mathbf{T}_1$ are also decreasing and $\mathbf{S}_1 \subset \mathbf{S}_0$, $\mathbf{T}_1 \subset \mathbf{T}_0$. Moreover, $\mathbf{S} \subset \mathbf{T}$ implies $\mathbf{S}_0 \subset \mathbf{T}_0$, $\mathbf{S}_1 \subset \mathbf{T}_1$. 

Now, take any $w \in\GRM{\mathbf{T}}\backslash\GRM{\mathbf{S}}$ and write $w=(u,u+v)$ for $u \in\GRM{\mathbf{T}_0}, v\in\GRM{\mathbf{T}_1}$. Take $t \in \mathbf{T} \backslash \mathbf{S}$ such that, when $w$ is expressed as a linear combination of the rows of $\GP{\mathbf{T}}R_m$, it has a non-zero coefficient for row $t$. Write $t=(b,t'),b\in\{0,1\},t'\in\{0,1\}^{m-1}$ and consider cases:

\paragraph{Case 1: $b=1,t'\in \mathbf{T}_1 \backslash \mathbf{S}_1$.} Then, $v \in \GRM{\mathbf{T}_1}\backslash \GRM{\mathbf{S}_1}$, so
\begin{equation}
|(u,u+v)|=|u|+|u+v|\geq |v| \geq d(\mathbf{T}_1,\mathbf{S}_1)
\end{equation}

\paragraph{Case 2: $b=0,t'\in \mathbf{T}_0 \backslash (\mathbf{S}_0 \cup \mathbf{T}_1)$.} Then, $$(u,u+v) \in \GRM{\mathbf{T}_0}\backslash\GRM{\mathbf{S}_0\cup \mathbf{T}_1}$$ so
\begin{equation}
|(u,u+v)|=|u|+|u+v| \geq 2d(\mathbf{T}_0,\mathbf{S}_0 \cup \mathbf{T}_1) \geq 2d(\mathbf{T}_0,\mathbf{S}_0)
\end{equation}
where the last step follows from $\mathbf{S}_0 \subset \mathbf{S}_0 \cup \mathbf{T}_1$. 

\paragraph{Case 3: $b=0,t'\in \mathbf{T}_1 \backslash \mathbf{S}_0$.} Then, $u \in \GRM{\mathbf{T}_1}\backslash\GRM{\mathbf{T}_1 \cap \mathbf{S}_0}$, so
\begin{equation}
|(u,u+v)| \geq |u| \geq d(\mathbf{T}_1,\mathbf{T}_1 \cap \mathbf{S}_0) \geq d(\mathbf{T}_1, \mathbf{S}_1)
\end{equation}
where the last step follows from $\mathbf{S}_1 \subset \mathbf{T}_1 \cap \mathbf{S}_0$. 
\end{proof}

It will be convenient later to have an expression for the minimum distance also in terms of the other parametrization of the family of vector spaces. For $\mathbf{S},\mathbf{T} \subset \{0,1\}^m$ increasing, $\mathbf{S} \subset \mathbf{T}$, let 
\begin{align}
r'(\mathbf{T},\mathbf{S}) &= \max\{m-|t|:t \in \mathbf{T}\backslash \mathbf{S}\} \\
d'(\mathbf{T},\mathbf{S}) &= \min\{|v|: v \in \GRMT{\mathbf{T}} \backslash \GRMT{\mathbf{S}}\}
\end{align}
Then:
\begin{corollary}\label{cor:DistanceForNestedGRMT}
Let $\mathbf{S},\mathbf{T} \subset \{0,1\}^m$ be decreasing, $\mathbf{S} \subset \mathbf{T}$. Then, $
d'(\mathbf{T},\mathbf{S})=2^{m-r'(\mathbf{T},\mathbf{S})}
$. 
\end{corollary}

\begin{proof}
Let $\phi$ be the function that switches ones with zeros and zeros with ones. 

Theorem \ref{thm:LinearAlgebraPropertiesOfGRM} implies
$
\GRM{\phi(\mathbf{S})}=\GRMT{\mathbf{S}}$ and \newline $
\GRM{\phi(\mathbf{T})}=\GRMT{\mathbf{T}}
$.
Then, $d'(\mathbf{T},\mathbf{S})=d(\phi(\mathbf{T}),\phi(\mathbf{S}))$. 

Moreover, the maximum number of zeros in an element of $\mathbf{T}\backslash \mathbf{S}$ is equal to the maximum number of ones in an element of $\phi(\mathbf{T})\backslash\phi(\mathbf{S})$, so $r'(\mathbf{T},\mathbf{S})=r(\phi(\mathbf{T}),\phi(\mathbf{S}))$. Theorem \ref{thm:DistanceForNestedGRM} completes the proof. 
\end{proof}

\subsection{Properties of the $CSS(\mathbf{X},\mathbf{Z})$ codes}

The $CSS(\mathbf{X},\mathbf{Z})$ codes inherit from the general construction all properties in section \ref{sec:GeneralCase}. Note in particular that these codes have especially simple encoding and syndrome measurement circuits, as illustrated already by the running example in Section \ref{sec:GeneralCase}. In addition, the connection to generalized Reed-Muller codes and the results of section \ref{sec:GRMcodes} give formulas for the distances. 

\begin{theorem}\label{thm:DistancesOfCSScodeFromGRM}
For $\mathbf{X},\mathbf{Z} \subset \{0,1\}^m$ satisfying \eqref{eq:FullyIntersectingCondition}, let 
$
\mathbf{S^x} = \ones{u}\ones{m}^T - \mathbf{X}$, $\mathbf{S^z} = \mathbf{Z}$, 
$\mathbf{K} = \{0,1\}^m \backslash\left( \GDO{\mathbf{S^x}} \cup \GUP{\mathbf{S^z}} \right)$. Then, the distances of $CSS(\mathbf{X},\mathbf{Z})$ are 
\begin{align}
d_x&=2^{\min\{m-|v|:v\in \mathbf{K}\}} \\
d_z&=2^{\min\{|v|:v \in \mathbf{K}\}}
\end{align}
\end{theorem}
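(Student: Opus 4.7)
The plan is to identify the four vector spaces that appear in the definitions of $d_x$ and $d_z$ as members of the generalized Reed-Muller family from section \ref{sec:GRMcodes}, and then invoke Theorem \ref{thm:DistanceForNestedGRM} and Corollary \ref{cor:DistanceForNestedGRMT}.

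First I would specialize the matrix decomposition of section \ref{sec:MatrixDecompositionCSS} to the setting in which every component is $\onebytwo{1}{1}$. Using the component decompositions \eqref{eq:OneOneDecompositionOne} and \eqref{eq:OneOneDecompositionTwo} from the proof of Theorem \ref{thm:LinearAlgebraPropertiesOfGRM}, one obtains $R_i=\twobytwo{1}{1}{0}{1}$ and hence $R(H^x,H^z)=R_m$. Absorbing the row permutations into the labelling of stabilizer generators (as permitted in section \ref{sec:PropertiesOfGeneralCSSconstruction}), this yields
\begin{equation*}
M(X)=L(X)\,\GP{\GDO{S^x}}R_m, \qquad M(Z)=L(Z)\,\GP{\GUP{S^z}}R_m^{-T}.
\end{equation*}
Since $L(X),L(Z)$ are invertible and $\GP{A}u=0$ for a column vector $u$ is equivalent to $u$ being supported off $A$, the definitions of $\GRM{\cdot}$ and $\GRMT{\cdot}$ from \eqref{eq:VectorSpacesGeneralizingReedMullerCodes} translate into
\begin{align*}
\text{Im}(M(X)^T) &= \GRM{\GDO{S^x}}, & \ker(M(Z)) &= \GRM{\GUP{S^z}^c}, \\
\text{Im}(M(Z)^T) &= \GRMT{\GUP{S^z}}, & \ker(M(X)) &= \GRMT{\GDO{S^x}^c}.
\end{align*}

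Next I would observe that $\GUP{S^z}^c$ is decreasing (as the complement of an increasing set) and equals $\GDO{S^x}\cup K$, and that the union is disjoint: if some $v\in\{0,1\}^m$ lay in both $\GDO{S^x}$ and $\GUP{S^z}$, there would exist $i,j$ with $Z_j\leq v\leq \ones{m}-X_i$, forcing $X_i\cap Z_j=\emptyset$ and contradicting \eqref{eq:FullyIntersectingCondition}. A symmetric argument shows $\GDO{S^x}^c=\GUP{S^z}\cup K$ with the union disjoint. The definitions of $d_x$ and $d_z$ thus become
\begin{equation*}
d_x = d(\GDO{S^x}\cup K,\GDO{S^x}), \qquad d_z = d'(\GUP{S^z}\cup K,\GUP{S^z}),
\end{equation*}
and in both cases the set-theoretic difference is exactly $K$. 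Theorem \ref{thm:DistanceForNestedGRM} applied to the first expression (giving $r=\max\{|v|:v\in K\}$) and Corollary \ref{cor:DistanceForNestedGRMT} applied to the second (giving $r'=\max\{m-|v|:v\in K\}$) then yield the claimed formulas.

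The main obstacle is the identification of the four spaces in the first step — carefully matching coefficient supports across the changes of basis by $R_m$ and $R_m^{-T}$ and confirming that the invertibility of the $L$-factors does not interfere. Once that bookkeeping is correct, the remainder is a direct application of the earlier results, and the fully intersecting hypothesis \eqref{eq:FullyIntersectingCondition} is used only to ensure the disjointness of $\GDO{S^x}$ and $\GUP{S^z}$ that underpins the nesting $S\subset T$ required by the two invoked statements.
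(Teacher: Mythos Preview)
Your proposal is correct and follows essentially the same route as the paper: identify $\ker(M(Z))$, $\mathrm{Im}(M(X)^T)$, $\ker(M(X))$, $\mathrm{Im}(M(Z)^T)$ with the generalized Reed--Muller spaces via the matrix decomposition, then apply Theorem~\ref{thm:DistanceForNestedGRM} and Corollary~\ref{cor:DistanceForNestedGRMT} and read off the exponent from $K$. The paper is terser---it simply cites the results of section~\ref{sec:PropertiesOfGeneralCSSconstruction} to obtain $d_x=d(\GUP{S^z}^c,\GDO{S^x})$ and $d_z=d'(\GDO{S^x}^c,\GUP{S^z})$ directly---whereas you spell out the identification of the four spaces and the disjointness argument explicitly, but the substance is the same.
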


\begin{proof}
From the results of section \ref{sec:GeneralCase} deduce that 
$d_x = d(\GUP{\mathbf{S^z}}^c,\GDO{\mathbf{S^x}})$ and $d_z = d'(\GDO{\mathbf{S^x}}^c,\GUP{\mathbf{S^z}})$. Then, from Theorem \ref{thm:DistanceForNestedGRM} and Corollary \ref{cor:DistanceForNestedGRMT} deduce that $d_x = 2^{m-r(\GUP{\mathbf{S^z}}^c,\GDO{\mathbf{S^x}})}$ and $d_z = 2^{m-r'(\GDO{\mathbf{S^x}}^c,\GUP{\mathbf{S^z}})}$. 
Finally, note that $r(\GUP{\mathbf{S^z}}^c,\GDO{\mathbf{S^x}})=\max\{|v|:v \in \mathbf{K}\}$ and $ r'(\GDO{\mathbf{S^x}}^c,\GUP{\mathbf{S^z}})=\max\{m-|v|:v \in \mathbf{K}\}$
to complete the proof. 
\end{proof}

Moreover, the connection to generalized Reed-Muller codes allows the computation of the minimum distance of $Im(M(\mathbf{S}))$ for any tuple of subsets $\mathbf{S}$. Recall that in the context of quantum error correction, the vector space $Im(M(\mathbf{S}))$ has the following interpretation: if the outcome of the syndrome measurements specified by $M(\mathbf{S})$ is error-free, then it is a vector in $Im(M(\mathbf{S}))$. Therefore, the minimum distance of $Im(M(\mathbf{S}))$ is a measure of the ability to correct errors in the syndrome. 

\begin{theorem}\label{thm:MinDistOfImMS}
Take any tuple $\mathbf{S}$ consisting of $u$ subsets $S_0,\dots, S_{u-1}$ of $[m]$. The minimum weight of a non-zero element of $Im(M(\mathbf{S}))$ is
\begin{equation}\label{eq:MinDistOfImMS}
\min \left\{ 2^{|T|}|\{i:S_i \cap T=\emptyset \}| : T \subset [m], \exists i :T \cap S_i = \emptyset \right\}
\end{equation}
\end{theorem}

\begin{proof}
First, $Im(M(\mathbf{S}))=Im(M(\mathbf{S})R_m)$, because $R_m$ is invertible. 

Next, for each $i \in [u]$, 
\begin{equation}
M(S_i)R_m = \otimes_{j=0}^{m-1} \begin{cases} \onebytwo{1}{0} & \text{if } j \in S_i \\ \twobytwo{1}{1}{0}{1} & \text{otherwise} \end{cases}
\end{equation}
Therefore, $M(S_i)R_m$ has the $2^{m-|S_i|}$ columns of $R_{m-|S_i|}$ placed in positions indexed by $T \subset [m]$ such that $T \cap S_i = \emptyset$ and all the remaining columns are zero. 

From the above, deduce that the weight of the column of $M(\mathbf{S})R_m$ indexed by $T \subset[m]$ is $2^{|T|} |\{i : S_i \cap T = \emptyset\}|$. Then, the expression \eqref{eq:MinDistOfImMS} gives the minimum weight of a non-zero column of $M(\mathbf{S})R_m$, and, therefore, it is an upper bound on the minimum distance of $Im(M(\mathbf{S}))$. It remains to prove that it is also a lower bound. 

Take any collection of $v$ subsets $T_0,\dots,T_{v-1}$ and consider the sum \newline $\sum_j M(\mathbf{S})R_m e_{T_j}$ of the corresponding columns of $M(\mathbf{S})R_m$ (here $e_{T_j}$ denotes the standard basis vector corresponding to $T_j$). Suppose without loss of generality that each column $M(\mathbf{S})R_m e_{T_j}$ in the sum is non-zero. 

Partition the sets $S_0, \dots, S_{v-1}$ in two groups depnding on whether \newline$\sum_j M(S_i)R_m e_{T_j}$ is zero or non-zero. Without loss of generality, $\sum_j M(S_i)R_m e_{T_j}$ is non-zero for $i=0, \dots, k-1$ and it is zero for $i=k, \dots u-1$. 

Let $T$ be a subset of $[m]$ of minimum size subject to the constraint that it intersects each of $S_{k}, \dots, S_{u-1}$ and that there is some $i \in [k]$ such that $T$ does not intersect $S_i$. 

\textbf{Claim 1:} each $T_j$ satisfies the given constraint, so $|T| \leq |T_j|$ for each $j \in [v]$. 

\textbf{Proof of Claim 1:} Note that for each $i$, the non-zero columns of $M(S_i)R_m$ are linearly independent. Then, $\sum_j M(S_i)R_m e_{T_j} =0$ implies $\forall j: S_i \cap T_j \neq \emptyset$. Deduce that for all $i=k, \dots, u-1$, for all $j \in [v]$, $S_i \cap T_j \neq \emptyset$. Finally, since each column $M(\mathbf{S})R_me_{T_j}$ is non-zero, deduce that $\forall j \in [v], \exists i \in [k]: T_j\cap S_i=\emptyset$. 

\textbf{Claim 2:} $\left| \sum_j M(\mathbf{S})R_me_{T_j} \right| \geq 2^{|T|}|\{i:S_i \cap T = \emptyset\}$. 

\textbf{Proof of Claim 2:} Take any $i \in [k]$. $\sum_j M(S_i)R_me_{T_j}$ is a non-zero linear combination of certain columns of $R_{m-|S_i|}$. Corollary \ref{cor:DistanceForNestedGRMT} implies that its weight is at least $2^{\min\{|T_j|:T_j\cap S_i=\emptyset\}}$. Claim 1 implies that this is at least $2^{|T|}$. Summing over $i$ proves Claim 2. 

Claim 2 implies that the expression \eqref{eq:MinDistOfImMS} is a lower bound on the minimum distance of $Im(M(\mathbf{S}))$, and completes the proof of the theorem. 
\end{proof}

\section{Examples}\label{sec:Examples}

This section is divided in three parts. 

First, \ref{sec:StandardRM} considers quantum CSS codes based on standard Reed-Muller codes. The quantum distances of these are half the number of qubits in a syndrome measurement. Except for the few examples with block size $\leq16$ these CSS codes have syndrome measurements on more than 8 qubits. 

The other two subsections consider quantum CSS codes based on generalized Reed-Muller codes. These provide much greater flexibility in designing quantum LDPC codes, including codes for which the distance is greater than the syndrome measurement weight. 

\ref{sec:SymmetricExamples} gives a number of examples for which $d_x=d_z$. These examples are suitable for noise models that are symmetric with respect to bit flip and phase flip errors, for example the depolarizing channel or the quantum erasure channel. 

Finally, \ref{sec:AsymmetricExamples} gives a number of examples with $d_x \neq d_z$. These examples are better suited for biased noise, where either bit flip errors occur with higher probability than phase flip errors or vice versa. 

The examples in this section are in the subfamily from section \ref{sec:SpecialCase}; all components of the matrix tuples $\mathbf{H^x},\mathbf{H^z}$ are $H^x_i=H^z_i=\onebytwo{1}{1}$, and the shorthand notation $M(\mathbf{X}),M(\mathbf{Z}),CSS(\mathbf{X},\mathbf{Z})$ is used. 

\subsection{Quantum CSS codes based on standard Reed-Muller codes}\label{sec:StandardRM}

Note that Theorem \ref{thm:DistancesOfCSScodeFromGRM} implies that for CSS codes obtained from generalized Reed-Muller codes, 
\begin{equation}\label{eq:DistanceUpperBound}
d_xd_z \leq 2^m,
\end{equation} 
with equality if and only if all elements of $\mathbf{K}$ have the same weight.

Now take\footnote{In the rest of the paper, $\mathbf{X},\mathbf{Z}$ are "tuples of subsets" or "tuples of indicator vectors". However, sometimes, it is more convenient to give $\mathbf{X},\mathbf{Z}$ as "sets of subsets" or "sets of indicator vectors". To convert the latter to the former, the lexicographic order (or any other order) can be used. \label{footnote:SetsAndTuples}}
\begin{align}
\mathbf{X} &= \{ v \in \{0,1\}^m:|v|=m-r+1\} \\
\mathbf{Z} &= \{v \in \{0,1\}^m:|v|=r+1\} \\
\GDO{\mathbf{S^x}}&=\{v \in \{0,1\}^m: |v|\leq r-1\} \\
\mathbf{K} &= \{v \in \{0,1\}^m: |v|=r\} \\
\GUP{\mathbf{S^z}} &= \{v \in \{0,1\}^m:|v| \geq r+1\}
\end{align}
Then, $CSS(\mathbf{X},\mathbf{Z})$ is a $[[2^m,{m \choose r}]]$ code with
\begin{enumerate}
\item ${m \choose r-1}2^{r-1}$ product-of-Pauli-$\PauliX$ syndrome measurements on $2^{m-r+1}$ qubits each.
\item ${m \choose r+1}2^{m-r-1}$ product-of-Pauli-$\PauliZ$ syndrome measurements on $2^{r+1}$ qubits each.
\item Distances $d_x=2^{m-r}$ and $d_z=2^r$. 
\end{enumerate}
These codes have been considered previously in in \cite{steane1999quantumreedmuller} and in \cite[Lemma 4.1]{sarvepalli2009asymmetric}.  
 
$CSS(\mathbf{X},\mathbf{Z})$ achieves equality $d_xd_z=2^m$ in \eqref{eq:DistanceUpperBound}, and, moreover, it does so optimally because it uses the entire layer $\{v \in \{0,1\}^m: |v|=r\}$ for logical operators. However, $CSS(\mathbf{X},\mathbf{Z})$ also has an undesirable property: the weight of syndrome measurements is twice the respective distance. 

If it is desired that syndrome measurements involve at most 8 qubits, then only examples with block size $\leq 16$ remain. The most interesting of these is the $[[16,6,4]]$ quantum code where both $Ker(M(\mathbf{X}))$ and $Ker(M(\mathbf{Z}))$ are the Reed-Muller(2,4) code. This example matches the best possible distance for a $[[16,6]]$ stabilizer code\footnote{Obtained from \cite{grassl2007codetables}, specifically \url{http://codetables.de/QECC.php?q=4&n=16&k=6}}. The rows of the matrices $M(\mathbf{X}),M(\mathbf{Z})$ specify 16 measurements of 8 qubits each. The spaces of valid syndromes $Im(M(\mathbf{X}))$, $Im(M(\mathbf{Z}))$ are [8,5,2] classical linear codes. 

Below, it will be seen how sacrificing part of the layer $\{v \in \{0,1\}^m: |v|=r\}$ for syndrome measurements leads to codes with better trade-off between distance and syndrome measurement weight.

\subsection{Examples based on generalized Reed-Muller codes with $d_x=d_z$}\label{sec:SymmetricExamples}

\subsubsection{Quantum codes based on classical product codes}

\paragraph{CSS code from the 2D product of SPC codes}

This is a CSS code where each of $M(\mathbf{X}),M(\mathbf{Z})$ is isomorphic (by row and column permutation) to the parity check matrix of the 2D product of the $[4,3]$ single parity check code. For more details, see the running example in Section \ref{sec:GeneralCase}.

\paragraph{CSS code from the 3D product of SPC codes}

This is a CSS code where each of $M(\mathbf{X}),M(\mathbf{Z})$ is isomorphic (by row and column permutation) to the parity check matrix of the 3D product of the $[8,7]$ single parity check code. The code appears previously in \cite{ostrev2024classicalproduct}. The code has parameters $[[512,174,8]]$, has 384 syndrome measurements of weight 8 and is obtained by choosing $m=9$ and 
\begin{align}
\mathbf{X}&=\threebyone{\{0,1,2\}}{\{3,4,5\}}{\{6,7,8\}} &\mathbf{Z}&=\threebyone{\{0,3,6\}}{\{1,4,7\}}{\{2,5,8\}}
\end{align}
i.e. $\mathbf{X},\mathbf{Z}$ correspond to the rows and columns of
\begin{equation}\label{eq:ThreeByThreeSquare}
\threebythree{0}{1}{2}{3}{4}{5}{6}{7}{8}
\end{equation}
The spaces of valid syndromes $Im(M(\mathbf{X}))$, $Im(M(\mathbf{Z}))$ are [192,169,3] classical linear codes. 

\subsubsection{Examples based on a cyclic pattern}

In these examples, $X,Z$ are obtained from several cyclic shifts of the first subset. 

A $[[32,14,4]]$ code with 24 syndrome measurements of weight 8 is obtained by choosing $m=5$ and
\begin{equation}
\mathbf{X}=\mathbf{Z}=\threebyone{\{0,1,3\}}{\{1,2,4\}}{\{2,3,0\}}
\end{equation}
The spaces of valid syndromes $Im(M(\mathbf{X}))$, $Im(M(\mathbf{Z}))$ are [12,9,2] classical linear codes. 

A $[[64,8,8]]$ code with 96 syndrome measurements of weight 8 is obtained by choosing $m=6$ and
\begin{align}
\mathbf{X}=\mathbf{Z}&=\{013,124,235,340,451,502\}\\
\mathbf{K}&=\{012,123,234,345,450,501,024,135\}
\end{align}
where shorthand notation omitting curly brackets and commas is used for subsets of $\{0,1,2,3,4,5\}$ (see also footnote \ref{footnote:SetsAndTuples}). The spaces of valid syndromes $Im(M(\mathbf{X}))$, $Im(M(\mathbf{Z}))$ are [48,28,4] classical linear codes.  

A $[[128,10,8]]$ with 192 syndrome measurements of weight 8 is obtained by choosing $m=7$ and
\begin{align}
\mathbf{X}=\mathbf{Z}&=\{013,124,235,346,450,561\} \\
\mathbf{K}&=\{345,145,135,134,1345,026,0256,0246,0236,0126\}
\end{align}
The spaces of valid syndromes $Im(M(\mathbf{X}))$, $Im(M(\mathbf{Z}))$ are [96,59,4] classical linear codes. 

\subsubsection{A further example with block size 128}

A $[[128,24,8]]$ code with 160 syndrome measurements of weight 8 is obtained by choosing $m=7$ and
\begin{align}
\mathbf{X}&=\{012,013,234,356,456\}\\
\mathbf{Z}&=\{143,146,360,325,025\}
\end{align}
The spaces of valid syndromes $Im(M(\mathbf{X}))$, $Im(M(\mathbf{Z}))$ are [80,52,4] classical linear codes. 

\subsubsection{Examples with distance greater than the syndrome measurement weight}

A $[[256,6,16]]$ code with 512 syndrome measurements of weight 8 is obtained by choosing $m=8$ and
\begin{align}
\mathbf{X}&=\{012,123,234,345,456,567,670,701\}\\
\mathbf{Z}&=\{136,247,350,461,572,603,714,025\}\\
\mathbf{K}&=\{2367,1357,1256,0347,0246,0145\}
\end{align}
The spaces of valid syndromes $Im(M(\mathbf{X}))$, $Im(M(\mathbf{Z}))$ are [256,125,8] classical linear codes. 

A $[[512,18,16]]$ code with 768 syndrome measurements of weigth 8 is obtained by taking $m=9$,
\begin{align}
\mathbf{X}&=\{012,345,678,048,156,237\}\\
\mathbf{Z}&=\{036,147,258,246,138,057\}
\end{align}
i.e. $\mathbf{X}$ corresponds to the rows and one set of diagonals, and $\mathbf{Z}$ to the columns and the other set of diagonals of \eqref{eq:ThreeByThreeSquare}. 
The spaces of valid syndromes $Im(M(\mathbf{X}))$, $Im(M(\mathbf{Z}))$ are [384,247,6] classical linear codes.

\subsection{Examples based on generalized Reed-Muller codes with $d_x \neq d_z$}\label{sec:AsymmetricExamples}

Note that in any of the examples of these section, the role of $\mathbf{X},\mathbf{Z}$ can be switched, depending on whether bit flip or phase flip errors are more likely in the noise model of interest. 

\subsubsection{Highly asymmetric CSS codes with distances $2,2^{m-1}$, $m=3,4,\dots$}

Take
\begin{align}
\mathbf{X}&=\{\{0\}\} \\
\mathbf{Z}&=\{\{0,i\}:i=1,\dots,m-1\}\} \\
\GDO{\mathbf{S^x}}&=\{v \in \{0,1\}^m: v \leq 01\dots 1\} \\
\mathbf{K} &= \{10\dots 0\} \\
\GUP{\mathbf{S^z}} &= \{v \in \{0,1\}^m:\exists i, v \geq \{0,i\}\}
\end{align}
Then, $CSS(\mathbf{X},\mathbf{Z})$ is a $[[2^m,1]]$ code with
\begin{enumerate}
\item $2^{m-1}$ product-of-Pauli-$\sigma_x$ syndrome measurements on $2$ qubits each.
\item $(m-1)2^{m-2}$ product-of-Pauli-$\sigma_z$ syndrome measurements on $4$ qubits each.
\item Distances $d_x=2^{m-1}$ and $d_z=2$. 
\end{enumerate}
The spaces of valid syndromes $Im(M(\mathbf{X}))$, $Im(M(\mathbf{Z}))$ are $[2^{m-1},2^{m-1},1]$ and $[(m-1)2^{m-2},2^{m-1}-1,m-1]$ classical linear codes.

\subsubsection{Block size 32}

A $[[32,2]]$ code with distances $d_x=8, d_z=4$, $48$ syndrome measurements of weight 4 and 4 syndrome measurements of weight 8 is obtained by choosing:
\begin{align}
\mathbf{X} &=\{\{0,1\},\{2,3,4\}\}\\
\mathbf{Z} &=\{\{0,2\},\{1,3\},\{0,4\},\{1,4\},\{1,3\}\}\\
\mathbf{K} &=\{\{0,3\},\{1,2\}\} 
\end{align}
The spaces of valid syndromes $Im(M(\mathbf{X}))$, $Im(M(\mathbf{Z}))$ are [12,11,2] and [40,19,4] classical linear codes.  

Another $[[32,2]]$ code with distances $d_x=8, d_z=4$ is obtained by choosing:
\begin{align}
\mathbf{X} &=\{\{0,1,4\},\{2,3,4\}\}\\
\mathbf{Z} &=\{\{0,2\},\{1,3\},\{4\}\}\\
\mathbf{K} &=\{\{0,3\},\{1,2\}\} 
\end{align}
This code has 8 syndrome measurements of weight 8, 16 syndrome measurements of weight 4, and 16 syndrome measurements of weight 2. The spaces of valid syndromes $Im(M(\mathbf{X}))$, $Im(M(\mathbf{Z}))$ are [8,7,2] and [32,23,3] classical linear codes. 

It turns out that this code has the same stabilizer as a toric code.\footnote{The author would like to thank the anonymous reviewer who pointed out that a refinement of the tesselation of the torus used in Figure \ref{fig:2dproducttannergraph} leads to a $[[32,2,d_x=8,d_z=4]]$ toric code.} This can be seen by a sequence of row operations that transform the layers $M(Z_0),M(Z_1)$ of the matrix $M(\mathbf{Z})$. The row operations are conveniently described using the shorthand notation $h=\onebytwo{1}{1}, e_0^T=\onebytwo{1}{0},e_1^T=\onebytwo{0}{1}, I=e_0e_0^T+e_1e_1^T$. Now, note that 
\begin{equation}
M(Z_0)+\left(h\otimes I \otimes h \otimes I \otimes e_1\right) M(Z_2) = h \otimes I \otimes h \otimes I \otimes \twobytwo{1}{0}{1}{0}
\end{equation}
which is just two copies of $h \otimes I \otimes h \otimes I \otimes e_0^T$. Similarly, 
\begin{equation}
M(Z_1)+\left( I \otimes h \otimes I \otimes h\otimes e_0\right) M(Z_2) =  I \otimes h \otimes I \otimes h \otimes \twobytwo{0}{1}{0}{1}
\end{equation}
which is just two copies of $ I \otimes h \otimes I \otimes h \otimes e_1^T$. Thus, $M(\mathbf{Z})$ is equivalent by row operations to
\begin{equation}
M'(\mathbf{Z}) = \threebyone{M'(Z_0)}{M'(Z_1)}{M(Z_2)}=\threebyone{h \otimes I \otimes h \otimes I \otimes e_0^T}{I \otimes h \otimes I \otimes h \otimes e_1^T}{I \otimes I \otimes I \otimes I \otimes h}
\end{equation}
The two parity check matrices $M(\mathbf{X})$ and $M'(\mathbf{Z})$ specify a Tanner graph that can be embedded in the torus; this is shown in figure \ref{fig:n32k2torictannergraph}. 

\begin{figure}[H]
\centering
\caption{Tanner graph of the toric code equivalent to $CSS(\{014,234\},\{02,13,4\})$. For $a,b,c \in \{0,1\}$, $C^{X_a}_{bc}$ denotes the check associated to row $bc$ of $M(X_a)$, and $C^{Z_a}_{bc}$ denotes the check associated to row $bc$ of $M'(Z_a)$. For $a,b,c,d \in \{0,1\}$, $C^{Z_2}_{abcd}$ denotes the check associated to row $abcd$ of $M(Z_2)$. For $a,b,c,d,e \in \{0,1\}$, qubit $Q_{abcde}$ (not labelled in the picture) is the node between $C^{Z_2}_{abcd}$ and the nearest node of type $C^{Z_e}$. }
\label{fig:n32k2torictannergraph}
\begin{tikzpicture}
\foreach \z in {-2,2}
{
	\foreach \y in {-2,2}
	{
		\draw (-4,\y)--(-2,\y+\z)--(0,\y)--(2,\y+\z)--(4,\y);
	}
}
\foreach \x in {-4,0}
\foreach \y in {-4,0}
\foreach \z in {-0.5,0.5}
\foreach \w in {0,2}
{
\draw (\x+\w,\y+\w)--(\x+\w+1+\z,\y+\w+1-\z)--(\x+\w+2,\y+\w+2);
}
\foreach \x in {0,4}
\foreach \y in {-4,0}
\foreach \z in {-0.5,0.5}
\foreach \w in {0,2}
{
\draw (\x-\w,\y+\w)--(\x-\w-1+\z,\y+\w+1+\z)--(\x-\w-2,\y+\w+2);
}
\foreach \a in {0,1}
	\foreach \b in {0,1}
		{
			\tikzmath{\x=4-4*\a;}
			\tikzmath{\y=4-4*\b;}
			\draw (\x,\y) node[fill=white]{\tiny{$C_{\b\a}^{X_0}$}};
		}
\foreach \a in {0,1}
	\foreach \b in {0,1}
		{
			\tikzmath{\x=-2+4*\a;}
			\tikzmath{\y=-2+4*\b;}
			\draw (\x,\y) node[fill=white]{\tiny{$C_{\a\b}^{X_1}$}};
		}
\foreach \a in {0,1}
	\foreach \b in {0,1}
		{
			\tikzmath{\x=-2+4*\a;}
			\tikzmath{\y=4-4*\b;}
			\draw (\x,\y) node[fill=white]{\tiny{$C_{\a\b}^{Z_1}$}};
		}
\foreach \a in {0,1}
	\foreach \b in {0,1}
		{
			\tikzmath{\x=4-4*\a;}
			\tikzmath{\y=-2+4*\b;}
			\draw (\x,\y) node[fill=white]{\tiny{$C_{\b\a}^{Z_0}$}};
		}
\foreach \a in {0,1}
	\foreach \b in {0,1}
		\foreach \c in {0,1}
			\foreach \d in {0,1}
				{
					\tikzmath{\x=-3+6*\a+2*\d-4*\a*\d;}
					\tikzmath{\y=-3+6*\b+2*\c-4*\b*\c;}
					\draw (\x,\y) node[fill=white]{\tiny{$C^{Z_2}_{\a\b\c\d}$}};
				}
\end{tikzpicture}
\end{figure}

Note however that $M'(\mathbf{Z})$ has columns of weight 2, so the ability to correct one error in the bit flip syndrome is lost when $M'(\mathbf{Z})$ is used instead of $M(\mathbf{Z})$. 

\subsubsection{Block size 128}

A $[[128,3]]$ code with distances $d_x=8,d_z=16$ and with $224$ syndrome measurements on 8 qubits each is obtained by choosing
\begin{align}
\mathbf{X}&=\{013,124,235,346,450,561,602,134\} \\
\mathbf{Z}&=\{013,124,235,346,450,561\}\\
\mathbf{K}&=\{0246,0236,0126\}
\end{align}
The spaces of valid syndromes $Im(M(\mathbf{X}))$, $Im(M(\mathbf{Z}))$ are [128,66,8] and [96,59,4] classical linear codes.  

\section{Conclusion and future work}\label{sec:Conclusion}

The present paper introduced intersecting subset codes, established a number of useful properties, and gave many examples with small and moderate block sizes. 

One direction for future work concerns algorithms that prescribe a corretion based on the syndrome. Since intersecting subset codes have connections to classical LDPC, polar, and Reed-Muller codes, there are many possible low-complexity decoding algorithms that can potentially be applied. It would be interesting to evaluate the various options empirically and to determine which gives the best performance. 

A second direction for future work concerns the performance of intersecting subset codes when used on noisy near term quantum hardware. As already explained, for a large subfamily of intersecting subset codes it is possible to calculate the distances for both data and syndrome errors, and to construct a large number of interesting examples. A next step could be an investigation of the behavior of these examples under the standard error model for fault tolerance, in which each elementary gate in the syndrome measurement circuit is followed by independent Pauli errors.

A third direction for future work concerns further exploration of the structure of intersecting subset codes. As explained in the introduction, the code design was inspired by the relation between Gallager's LDPC codes and classical products of single parity check codes. There was no a priori reason to expect the connection to Reed-Muller codes in section \ref{sec:SpecialCase} or the two examples of intersecting subset codes that turn out to also be toric codes. These coincidences indicate that intersecting subset codes have rich and interesting structure that may hold further surprises.

\section*{Acknowledgements}

This work was supported by the QuantERA II Programme, through the project Error Correction for Quantum Information Processing (EQUIP). This project has received funding from the European Union's Horizon 2020 Research and Innovation Programme under Grant Agreement no. 731473 and 101017733. 

The author would like to thank Pradeep Sarvepalli and two anonynous reviewers for the detailed and constructive comments and useful suggestions for improvement.

\end{document}